\title{\Huge$\,$\\[-2.75ex]
{Analog Subspace Coding: A New Approach to Coding for Non-Coherent Wireless Networks}\\[0.50ex]}
\author{\large%
Mahdi Soleymani 
and 
Hessam Mahdavifar,\,\,\IEEEmembership{Member,~IEEE}\\
\vspace{-.25in}
\thanks{%
 The material in this paper was presented in part at the IEEE International Symposium on Information Theory in June 2020.
}
\thanks{This work was supported by the National Science Foundation
under grants CCF--1763348, CCF--1909771, and CCF--1941633.}
\thanks{M.\ Soleymani and H.\ Mahdavifar are with the Department of Electrical Engineering and Computer Science, University of Michigan, Ann Arbor, MI 48104 (email: mahdy@umich.edu and hessam@umich.edu).}
}
\newtheorem{theorem}{{Theorem}}
\newtheorem{lemma}[theorem]{{Lemma}}
\newtheorem{corollary}[theorem]{{Corollary}}
\newtheorem{definition}{{Definition}}
\newcommand{\cC}{{\cal C}}
\newcommand{\cF}{{\cal F}}
\newcommand{\cH}{{\cal H}}
\newcommand{\cN}{{\cal N}}
\newcommand{\cP}{{\cal P}}
\newcommand{\cR}{{\cal R}}
\newcommand{\cW}{{\cal W}}
\DeclareMathAlphabet{\mathbfsl}{OT1}{ppl}{b}{it} 
\newcommand{\bA}{\mathbfsl{A}} 
\newcommand{\bB}{\mathbfsl{B}}
\newcommand{\bC}{\mathbfsl{C}} 
\newcommand{\bE}{\mathbfsl{E}} 
\newcommand{\bG}{\mathbfsl{G}} 
\newcommand{\bH}{\mathbfsl{H}}
\newcommand{\bI}{\mathbfsl{I}}
\newcommand{\bN}{\mathbfsl{N}}
\newcommand{\bP}{\mathbfsl{P}}
\newcommand{\bQ}{\mathbfsl{Q}} 
\newcommand{\bR}{\mathbfsl{R}}
\newcommand{\bT}{\mathbfsl{T}}
\newcommand{\bX}{\mathbfsl{X}}
\newcommand{\bY}{\mathbfsl{Y}} 
\newcommand{\bZ}{\mathbfsl{Z}}
\newcommand{\bc}{\mathbfsl{c}} 
\newcommand{\bu}{\mathbfsl{u}} 
\newcommand{\bv}{\mathbfsl{v}}
\newcommand{\bx}{\mathbfsl{x}}
\newcommand{\by}{\mathbfsl{y}}
\newcommand{\floor}[1]{\left\lfloor #1 \right\rfloor}
\newcommand*{\rom}[1]{\expandafter\romannumeral #1}
\newcommand{\AlignFootnote}[1]{%
  \ifmeasuring@
  \else
    \iffirstchoice@
      \footnote{#1}%
    \fi
  \fi}
\newcommand{\Fp}{{{\Bbb F}}_{\!p}}
\newcommand{\be}{\begin{equation}}
\newcommand{\ee}{\end{equation}} 
\newcommand{\eq}[1]{(\ref{#1})}
\renewcommand{\leq}{\leqslant}
\renewcommand{\geq}{\geqslant}
\renewcommand{\Bbb}{\mathbb}
\newcommand{\C}{{\Bbb C}} 
\newcommand{\N}{{\Bbb N}}
\newcommand{\R}{{\Bbb R}}
\newcommand{\Lf}{{\Bbb L}}
\newcommand{\Tref}[1]{Theo\-rem\,\ref{#1}}
\newcommand{\Lref}[1]{Lem\-ma\,\ref{#1}}
\newcommand{\Cref}[1]{Co\-ro\-lla\-ry\,\ref{#1}}
\newcommand{\Dref}[1]{Definition\,\ref{#1}}
\newcommand{\Fq}{{{\Bbb F}}_{\!q}}
\newcommand{\deff}{\mbox{$\stackrel{\rm def}{=}$}}
\DeclareMathOperator{\rank}{rank}
\newcommand{\Span}[1]{{\left\langle {#1} \right\rangle}}
\newcommand{\shalf}{\mbox{\raisebox{.8mm}{\footnotesize $\scriptstyle 1$}
\footnotesize$\!\!\! / \!\!\!$ \raisebox{-.8mm}{\footnotesize
$\scriptstyle 2$}}}
\newcommand{\dmin}{d_{\text{min}}}
\newcommand{\norm}[1]{\left\lVert#1\right\rVert}
\begin{document}

\maketitle

\begin{abstract}

We provide a novel framework to study subspace codes for non-coherent communications in wireless networks. To this end, an \textit{analog operator channel} is defined with inputs and outputs being subspaces of $\C^n$. Then a certain distance is defined to capture the performance of subspace codes in terms of their capability to recover from interference and rank-deficiency of the network. We also study the robustness of the proposed model with respect to an additive noise. Furthermore, we propose a new approach to construct subspace codes in the analog domain, also regarded as Grassmann codes, by leveraging polynomial evaluations over finite fields together with characters associated to finite fields that map their elements to the unit circle in the complex plane. The constructed codes, referred to as character-polynomial (CP) codes, are shown to perform better comparing to other existing constructions of Grassmann codes in terms of the trade-off between the rate and the normalized minimum distance, for a wide range of values for $n$.

\end{abstract}

\section{Introduction} 
\label{sec:Introduction}

 Wireless networks are rapidly growing in size, are becoming more hierarchical, and are becoming increasingly distributed. In the next generation of wireless cellular networks, namely 5G, tens of small cells, hundreds of mobile users demanding ultra-high data rates, and thousands of Internet-of-Things (IoT) devices will be all operating within the coverage of one single cell \cite{3gpp-5g-2}. While the efforts for 5G standardization are still ongoing, several new features have been introduced in the recent releases of the Long-Term Evolution (LTE) standard to start supporting the diverse requirements of the wide range of use cases in 5G. Started with Release 10 the deployment of small cells in LTE is becoming increasingly popular to deliver enhanced spectral capacity and extended network coverage \cite{nakamura2013trends}, which is also fundamental to enhanced mobile broadband (eMBB) and massive machine type communications (mMTC) scenarios in 5G. Moreover, features such as coordinated multipoint (CoMP) transmission and reception \cite{sun2013interference} together with enhanced intercell interference coordination (eICIC) \cite{deb2014algorithms} have been introduced and used since Release 10 and evolved since then. 
 
The aforementioned techniques are, however, difficult to scale as the number of small cells, that can be also regarded as relays, keeps increasing and as more layers are added in the hierarchical network. More specifically, conventional methods including channel estimation of point-to-point wireless links, link-level block coding, and successive interference cancellation do not properly scale with the size of such massive networks. Motivated by the emergence of such massive networks we study coding for wireless networks consisting of many relays operating in a non-coherent fashion, where the network nodes are oblivious to the channel gains of the point-to-point wireless links as well as the structure of the network. In a sense, this resembles a random linear network coding scenario, though completely in the physical layer, where physical-layer transport blocks are linearly combined in the relay nodes as they receive the spatial sum of blocks sent by the neighboring nodes. This holds assuming omni-directional radio frequency (RF) transmitter and receiver antennas are deployed at the network nodes. Also, in the considered setup, the relay nodes, such as small cells, do not attempt to decode messages and only amplify and forward the received physical-layer blocks. 

In this paper, we define a new framework for reliable communications over wireless networks in a non-coherent fashion, as discussed above, using \textit{analog subspace codes}. Let $\cW$ denote an ambient vector space of dimension $n$ over a field $\Lf$, i.e., $\cW = \Lf^n$. A subspace code in $\cW$ is a non-empty subset of the set of all the subspaces of $\cW$. We observe that subspace codes in the analog domain, where the underlying field $\Lf$ is $\R$ or $\C$, become relevant for conveying information across networks in such a scenario.

This work is mainly inspired by the seminal work by Koetter and Kschischang \cite{KK}, who defined a new framework for correcting errors and erasures in a randomized network coding scenario \cite{ho2003benefits}. They defined an \textit{operator channel} to capture the effect of errors and erasures in such a scenario and showed that subspace codes over finite fields are instrumental to provide reliability for communications over operator channels. In a sense, we develop a counterpart for Koetter-Kschischang's operator channel in the analog domain, referred to as \textit{analog operator channel}. More specifically, the analog operator channel models the \textit{rank-deficiency} of the network, caused by relay failures or lacking a sufficient number of active relays, as subspace erasures. Also, it models the interference from neighboring cells/small cells as subspace errors. We further discuss various methods for constructing subspace codes for the analog operator channel. In particular, we propose a novel construction method by leveraging characters associated to Abelian groups and finite fields, and mapping them to the unit circle in the complex plane. 

It is worth noting that the setup considered in this paper fundamentally differs from Koetter-Kschischang's setup in two main aspects. First, due to the fundamental differences between the structure of finite fields and the analog fields of $\R$ or $\C$ constructing codes for analog operator channels requires entirely different approaches comparing to subspace codes constructed over finite fields in \cite{KK}. Second, the effect of physical layer is abstracted out in the setup considered in \cite{KK} as it is often the case in the network coding literature. However, in this work, we arrive at the notion of analog operator channels of subspace codes with an innovative perspective, namely, physical layer communications over wireless networks. Hence, we consider the additive noise that is always present in the physical layer, in addition to subspace errors and erasures discussed above, and characterize the robustness of the analog operator channel model with respect to the additive noise.  

Analog subspace codes can be also viewed as codes in Grassmann space, also referred to as Grassmann codes, provided that the dimensions of all the subspace codewords are equal. There is a long history on studying bounds \cite{shannon1959probability,barg2002bounds,bachoc2006linear,bachoc2006bounds,barg2006bound}, using packing and covering arguments, and capacity analysis in Grassmann space, mostly motivated by space-time coding for multiple-input multiple-output (MIMO) wireless systems \cite{marzetta1999capacity, hochwald2000unitary, zheng2002communication}. In such systems, a separate block code is needed to guarantee the reliability regardless, and the space-time code can be interpreted as the means of improving the reliability by exploiting the diversity the MIMO channel offers. However, we arrive at the problem of constructing subspace codes from the analog operator channel. In other words, subspace codes are used for reliable communications over analog operator channels the same way block codes are conventionally used for reliable communications over point-to-point links. A more detailed overview of prior works on Grassmann codes and their relations to our approach is provided later in Section\,\ref{sec:sec2c}. 

The rest of this paper is organized as follows. In Section\,\ref{sec:two} the analog operator channel is defined and an overview of the related prior work on Grassmann codes is provided. In Section\,\ref{sec:three} a new notion of subspace distance is defined and its relation with correcting subspace errors and erasures is discussed. The robustness of the analog operator channel with respect to the additive noise is analyzed in Section\,\ref{sec:four}. Also, new constructions of analog subspace codes are discussed in Section\,\ref{sec:five}. Finally, the paper is concluded in Section \,\ref{sec:six}.

\section{Preliminaries}
\label{sec:two}

\subsection{Notation Convention}
\label{notation}

Let $[n]$ denote the set of positive integers less than or equal to $n$, i.e., $[n]=\{1,2,\hdots,n\}$ for $n \in \N$. Also, for $x \in \R$, $(x)_+\,\deff\, \max(0,x)$. 

In this paper, matrices are represented by bold capital letters. The row space of a matrix $\bX$ is denoted by $\Span{\bX}$. Also, for a square matrix $\bX$, the trace of $\bX$, denoted by $\text{tr}(\bX)$, is defined to be the sum of elements of $\bX$ on the main diagonal. 

The ambient vector space is denoted by $W$. The parameter $n$ is reserved for the dimension of $W$ throughout the paper. Also, we have $W = \Lf^n$, where $\Lf$ can be either $\R$ or $\C$. In order to state results for $\Lf$, which could be either $\R$ or $\C$, a parameter $\beta$ is defined, where $\beta=1$ for $\Lf=\R$ and $\beta = 2$ for $\Lf = \C$. Let $\cP(W)$ denote the set of all subspaces of $W$. For a subspace $V \in \cP(W)$, the dimension of $V$ is denoted by $\dim(V)$. The sum of  two subspaces $U, V \in \cP(W)$ is defined as
\be
\label{plus-def}
U+V\,\deff\,\{u+v  :  u\in U  ,  v\in V\}.
\ee
Note that if $U$ and $V$ intersect trivially, i.e., $U\cap V=\{ \boldsymbol{0}\}$, where $\boldsymbol{0}$ is the all-zero vector, then $U+V$ is a direct sum and is denoted by $U \oplus V$. 

The set of all $m$-dimensional subspaces of $\Lf^n$ is denoted by $G_{m,n}(\Lf)$, which is referred to as Grassmann space or Grassmannian in the literature. Given $\Lf = \C$, $G_{m,n}(\C)$ can be also described as follows:
\be\label{grassmanndef}
G_{m,n}(\C)\,\deff\,\{\Span{\bZ} : \bZ \in \C^{m\times n}, \bZ\bZ^{\text{H}}=\bI_m\},
\ee
where $\bI_m$ is the $m\times m$ identity matrix. The elements of $G_{m,n}(\Lf)$ are also referred to as $m$-planes. 

The Frobenius norm of a matrix $\bA$ is defined as 
\be \label{Frobnius norm deff}
\norm{\bA}\,\deff\,\sqrt{\text{tr}(\bA^{\text{H}}\bA)}=\sqrt{\text{tr}(\bA \bA^{\text{H}})}.
\ee

By fixing a basis for $W$, any vector in $W$ is represented by $n$-tuples of coordinates with respect to the chosen basis. The  inner product between $\bu,\bv\in W$ is then defined as: $\bu.\bv\,\deff\,\sum_{i=1}^n u_iv_i$. Then the orthogonal subspace of  $U\in \cP(W)$ is defined as
\be
\label{orthogonal subspace}
U^\perp\,\deff\,\{\bv \in W : \bu.\bv=0, \forall \bu \in U \}.
\ee

For a set $M$, a $\sigma$-quasimetric on $M$ is a  function $d:M \times M \xrightarrow{} \R$ that satisfies all the conditions of a metric except the triangle inequality being relaxed to 
\be\label{quasi-triangle-ineq}
\forall x,y,z \in M, \quad  d(x,z)<\sigma \bigl(d(x,y)+d(y,z)\bigr),
\ee
for a constant $\sigma >1$. This inequality is referred to as \emph{$\sigma$-relaxed triangle inequality}.

\subsection{Analog operator channel}
\label{sec:analog_operator}

This model is motivated by non-coherent communications over wireless networks, as discussed in Section\,\ref{sec:Introduction}. Hence, each piece of the model is followed by a brief explanation from this perspective. Let $\bx_i  \in \C^n$, for $i\in[m]$, denote the input vectors. The input vectors, as physical layer transport blocks, can be sent by several antennas of a transmitter, e.g., a cellular base station, at different time frames. By discarding the interference and the additive noise, the output of the channel is a set of vectors $\by_j=\sum_{i=1}^m h_{j,i} \bx_i$, where $j \in [l]$. Each vector $\by_j$ is the received transport block by an antenna of the receiver at a certain time frame. Note that a time-frame-level synchronization is assumed across the wireless links, e.g., by employing specific patterns in a designated subset of orthogonal frequency-division multiplexing (OFDM) symbols in each time frame as in LTE networks \cite{cox2012introduction}. Also, the relays in the network, e.g., small cells, are assumed to be amplify-and-forward relays. They can forward a transport block, received during a certain time frame, in a subsequent time frame. This is because the communication is assumed to be done in the unit of time frame, i.e., the relay has to wait for the current time frame to end before it can begin forwarding what it received. Then, due to the different delays, in the unit of time frames, that the transport blocks may encounter as they are propagated through the network, the received $\by_j$'s can be the combination of transmitted $\bx_i$'s across different antennas and time frames. Under a non-coherent scenario, both the transmitter and the receiver are oblivious to $h_{j,i}$'s, the topology of the network, and the link-level channel gains. It is possible that several interference blocks, e.g., up to $t$ of them, from neighboring cells/small cells are also received by the receiver. Hence, we have 
\be
\label{cheqs}
\bY_{l \times n}=\bH_{l\times m}\bX_{m \times n}+\bG_{l \times t}\bE_{t \times n},
\ee
where $\bX$'s rows are the transmitted blocks $\bx_1,\bx_2,\dots,\bx_m$,  $\bE$'s rows are the interference blocks $\boldsymbol{e}_1,\boldsymbol{e}_2,\dots,\boldsymbol{e}_t$, $\bY$'s rows are the received blocks $\by_1,\by_2,\dots,\by_l$, and $\bH = [h_{j,i}]_{l\times m}$ and $\bG = [g_{j,i}]_{l \times t}$ are assumed to be unknown to the transmitter and the receiver. Note that both $\bH$ and $\bG$ depend on the network topology as well as the link-level channel gains, however, $\bG$ also depends on the specific nodes where the interference blocks have entered the network. 

An example of the communication scenario, described by \eq{cheqs}, is illustrated in Figure\,1. Here, all the considered nodes have one transmit and one receive antennas and the communication is done in two time frames. We have $\bX=[\bx]_{1 \times n}$, 
$\bY=\left[
\begin{array}{c}
\by_1 \\
\by_2
\end{array}
\right]_{2 \times n}$, $\bE = [\bold{e}]_{1 \times n}$,
$\bG=\left[
\begin{array}{c}
g \\
0
\end{array}
\right]_{2 \times 1} $, and $\bH=\left[
\begin{array}{c}
h_1 \\
h_2h_3
\end{array}
\right]_{2 \times 1} $. In other words, the receiver receives $h_1 \bx + g \bold{e}$ in the first time frame and receives $h_2 h_3 \bx$ in the second time frame.

\begin{figure}[t]
\begin{center}
\includegraphics[width=.9\linewidth]{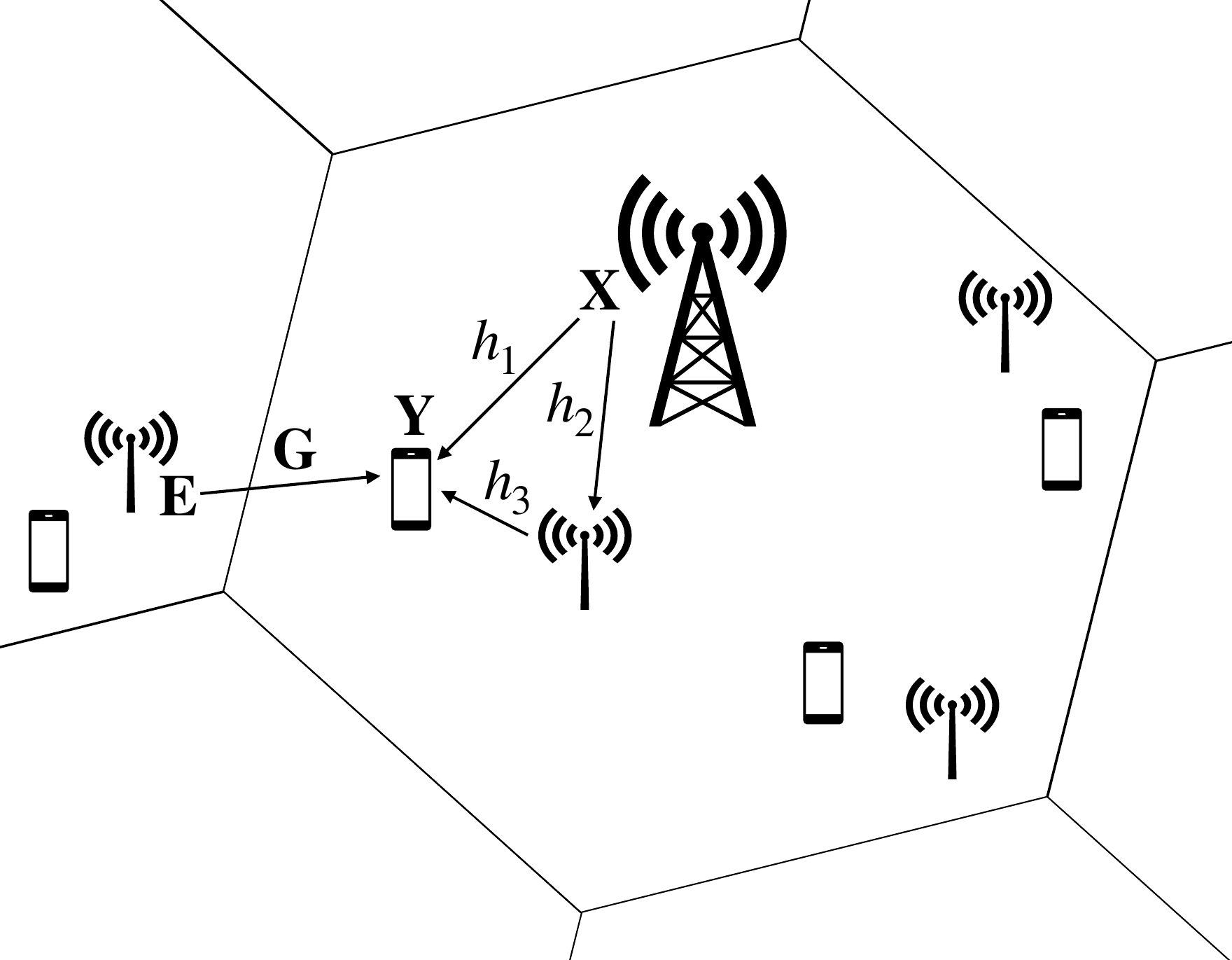}
\caption{\small An example of a non-coherent wireless network with the input-output relation specified in \eqref{cheqs}.}
\end{center}
\end{figure}

In the scenario described by \eq{cheqs}, even in the absence of the interference blocks $\bE$, the only way to convey information to the receiver is through the subspace spanned by the rows of $\bX$. This is mainly due to the underlying assumption on non-coherent communications, where $\bH$ is assumed to be completely unknown to both the transmitter and the receiver. Furthermore, $\bH$ may not be full column rank, e.g., when $l < n$, which implies that $\Span{\bX}$ can not be fully recovered. In order to capture the rank-deficiency of $\bH$, a stochastic erasure operator is defined as follows. For some $k\geq 0$, $\cH_k(U)$ returns a random $k$-dimensional subspace of $U$, if $\dim(U)> k$, and returns $U$ otherwise. Then the analog operator channel is defined as follows:

\begin{definition}\label{operatorchannel}
An analog operator channel associated with $W$ is a channel with input $U \in \cP(W)$ and output $V \in \cP(W)$ together with the following input-output relation:
\be
\label{channelIOrelation}
V=\cH_k(U)\oplus E,
\ee
where $E$ is the interference subspace, also referred to as the error subspace, with $E \cap U = \{\boldsymbol{0}\}$. Then $\rho = \text{dim}(U) -k$ is referred to as the dimension of erasures and $t = \text{dim}(E)$ is referred to as the dimension of errors. 
\end{definition}
\noindent
{\bf Remark\,1.} In the communication scenario described by \eq{cheqs}, the additive noise of the physical layer, often modeled as additive white Gaussian noise (AWGN), is discarded. Note that the intermediate relay nodes in the wireless network, such as small cells, are not often limited by power constraints as the end mobile users are. Hence, it is natural to assume that the relay nodes operate at high signal-to-noise ratio (SNR). Nevertheless, it is essential to investigate the effect of additive noise as a perturbation of the transformation described by \eq{cheqs}. In other words, instead of $\bY$, $\bY + \bN$ is received by the receiver, where $\bN$ is a matrix of i.i.d. Gaussian random variables. This, in turn, results in a perturbation in the analog operator channel, defined in Definition\,\ref{operatorchannel}. We will discuss the robustness of the considered channel model with respect to the additive noise in Section\,\ref{sec:four}.

\subsection{Related prior work}
\label{sec:sec2c}

The Grassmann space can be turned into a metric space using \textit{chordal} distance. Roughly speaking, \textit{chordal} distance generalizes the notion of angle between two lines to subspaces of equal dimension, which will be defined more precisely in the next section. Let $\delta_c$ denote the \textit{chordal} distance normalized by $n$. Also, let the rate $R$ of a code $\cC$ be defined as $\ln{|\cC|}/n$.

 The problem of deriving bounds on the minimum distance of subspace codes of fixed dimension, i.e., packing subspaces in $G_{m,n}(\R)$, was first studied by Shannon for the special case of $m=1$ \cite{shannon1959probability}. The packing problem in $G_{1,n}(\R)$ is also related to designing spherical codes, i.e., packing points on a hyper sphere in the Euclidean space with a given angular separation. A lower bound on the best rate $R$ of $\cC$ with the minimum angle $\theta$, i.e., $\delta_c = \sin(\theta)$, is derived by Shannon \cite{shannon1959probability} as $R>-\ln(\sin(\theta))$, assuming $n \rightarrow \infty$. Lower and upper bounds on the largest achievable rate $R$, given a fixed $\delta_c$ and $m$ while $n \rightarrow \infty$, were derived in \cite{barg2002bounds}. The upper bound was later improved in \cite{bachoc2006linear,bachoc2006bounds,barg2006bound}. 

In terms of lower bounds for the packing problem in the Grassmann space, an achievability bound for the minimum distance of the codes for finite values of $n$ was derived in \cite{henkel2005sphere}. The construction of subspace codes in $G_{m,n}(\R)$ and analysis of their minimum \textit{chordal} distance was also studied in \cite{conway1996packing}, and some of these constructions were observed to be optimal. However, the suggested construction methods in \cite{conway1996packing} are numerical, making them computationally infeasible for general parameters. Another numerical method for constructing codes based on alternative projection is proposed in \cite{dhillon2008constructing}. In another line of work, motivated by quantum error-correcting codes, constructions based on group structures are suggested \cite{nebe2001invariants,shor1998family,calderbank1999group,calderbank1997quantum,calderbank1998quantum} (see, e.g., \cite{sloane2002packing} for a brief survey). A connection between these codes and massive multiple access scenarios is observed in \cite{itw2019}, where low-complexity decoders are also proposed. Another related line of work is the \emph{frame} design problem in a Hilbert space, where a frame is a set of overcomplete unit norm vectors having small mutual inner product. Such a design has applications to a broad range of problems, including problems in signal processing, distributed sensing, parallel processing, etc \cite{strohmer2003grassmannian,xia2005achieving,tropp2005designing,kutyniok2009robust,ding2007generic}.

The most notable line of work on Grassmann codes is motivated by space-time coding for MIMO channels. In other words, the problem of constellation design for communications over a non-coherent MIMO is observed to be closely related to the packing problem in the complex Grassmann space. In this line of work, initiated by Marzetta and Hochwald \cite{marzetta1999capacity}, it is assumed that neither the transmitter nor the receiver knows the fading coefficients of the channel which are assumed to be fixed in the coherence time of the channel. They also proposed a constellation scheme in the Grassmann space called unitary space-time modulation \cite{hochwald2000unitary}. Zheng and Tse \cite{zheng2002communication} derived the capacity of non-coherent MIMO channel in high SNR in terms of the channel coherence time and the number of receive and transmit antennas. They further give a geometric interpretation of the capacity expression as sphere packing in the Grassmann manifold \cite{zheng2002communication}. Furthermore, it is observed by Han and Rosenthal \cite{han2006geometrical} that maximizing the chordal distance is the appropriate design criterion for the design of unitary space-time constellation at the low-SNR regime in non-coherent wireless communication systems. 
 In another related work by Hochwald \emph{et al.} , a lower bound on pairwise error probability between two subspaces of equal dimension is derived in \cite{hochwald2000systematic}, which is then used to derive a certain criterion for constellation design by maximizing the \textit{chordal} distance between subspaces in the Grassmann space. They also provide a Fourier-based construction method and its equivalent algebraic construction employing linear codes to design constellations for non-coherent MIMO channels. Their approach involves random search procedure and, hence, is not scalable with $n$. Another numerical optimization method is described in \cite{agrawal2001multiple}. Other constructions of Grassmann packings based on Nordstrom-Robinson Codes \cite{aggarwal2006grassmannian} and Reed-Muller codes \cite{ashikhmin2010grassmannian} were also shown to closely approximate the channel capacity of non-coherent MIMO channels.

Another major related line of work includes a wide range of signal processing tasks where the pairwise geometry of subspaces plays an important role in characterizing the performance of the system, e.g., the misclassification probability for the optimal maximum-a-posteriori (MAP) classifier in a Gaussian mixture model (GMM). In particular, a duality  between the problems of classification of $k$-dimensional subspaces from noisy features and the communication over non-coherent MIMO channel are observed in \cite{reboredo2014compressive} and \cite{nokleby2015discrimination}. Also, the results on the capacity of non-coherent MIMO channels are used to provide necessary conditions for successful classification in \cite{nokleby2013information}.  The probability of misclassification is further analyzed in \cite{huang2015role} and is characterized in terms of the principal angles.

Note that non-coherent MIMO communication can be considered as a special case of the non-coherent wireless networking scenario, described by \eq{cheqs}, where there is no interference term while discarding additive noise. Also, there is no relay node and the number of transmitter and receiver antennas are known. In other words, the structure of the network and consequently, the rank of transform matrix $\bH$ is known. Accordingly, prior works on non-coherent MIMO do not deal with subspace errors and erasures, and the underlying communication channel model is totally different from the analog operator channel considered in this paper, as defined in Definition\,\ref{operatorchannel}. 

Note also that the aforementioned prior works on code constructions in the Grassmann space do not often provide \textit{good} solutions, in terms of the trade-off between $R$ and the normalized minimum distance, for general $m$ and $n$. However, a comparison, in terms of the trade-off between $R$ and the normalized minimum distance, between a new construction of analog subspace codes proposed in this paper and the best existing constructions in the literature is done in Section\,\ref{sec:cp}.

\section{Analog Metric Space, Subspace Codes, \\and Error Correction}
\label{sec:three}

In this section we provide a precise description of the \textit{chordal} distance defined for Grassmann space. Then we extend and modify the \textit{chordal} distance to arrive at a new notion of distance, defined for the set of all subspaces of the ambient space, i.e., $\cP(W)$, and show that it conveniently captures the error-correction capability of subspace codes when used over analog operator channels. 

Given the structure of analog operator channels and their input and output alphabets being $\cP(W)$, as defined in Definition\,\ref{operatorchannel}, it is natural to design codes over $\cP(W)$ in order to correct errors and erasures associated with such channels. To this end, the first step is to define a distance function that \textit{properly} captures the effect of errors and erasures imposed by the analog operator channel. Before that, we discuss the \emph{chordal} distance between two $m$-planes, which makes $G_{m,n}(\Lf)$ a metric space. 

The chordal distance $d_c : G_{m,n}(\Lf) \times G_{m,n}(\Lf) \rightarrow \R $ was first introduced for $\Lf=\R$ in \cite{conway1996packing} and was extended to $\Lf = \C$ in \cite{barg2002bounds}. Consider two $m$-planes $U$ and $V$. Let $\bu_i \in U$ and $\bv_i \in V$ be row vectors having unit length such that $|\bu_i \bv_i^{\text{H}}|$ is maximal, subject to the conditions $\bu_i \bu_j^{\text{H}}=0$ and $\bv_i \bv_j^{\text{H}}=0$ for all $i,j$ with $i > j \geq 1$. Then the \textit{principal} angle $\theta_i$, for $i \in [m]$, between $U$ and $V$ is defined as $\theta_i=\arccos{|\bu_i \bv_i^{\text{H}}|}$, see, e.g., \cite{barg2002bounds,roy1947note}. Then the chordal distance between $U$ and $V$ is defined as follows:
\be
\label{chordal distance deff}
d_c(U,V)\,\deff\, \sqrt{\sum_{i=1}^m \sin^2(\theta_i)}.
\ee
Note that this is not the only possible definition for distance between subspaces (see, e.g., \cite{barg2002bounds,ye2014distance} for other similar notions). However, we focus on extending a certain variation of this notion of distance in this paper, to be discussed next. We will show later that it can be made suitable for capturing the error correcting capabilities of subspace codes for the analog operator channel. 

Let $\bZ$ denote an orthonormal matrix spanning $V \in  G_{m,n}(\Lf)$, i.e. ,
$$
V=\Span{\bZ}, \quad \bZ \bZ^{\text{H}} = \bI_m.
$$
Then, the matrix $\bP_V=\bZ^{\text{H}}\bZ$ is an orthogonal projection operator from $\Lf^n$ on $V$. It is shown in \cite{conway1996packing} that $G_{m,n}(\R)$ with chordal distance can be isometrically embedded into a sphere in the Euclidean space $\R^D$ , where $D= {{n+1}\choose{2}}-1$, using the projection matrices associated with subspaces. More specifically,
\be
\label{smallsphere}
\norm{\bP_V-\frac{m}{n}\bI_n}^2=\sqrt{\frac{m(n-m)}{n}},
\ee
for all $V \in G_{m,n}(\R)$. It is also shown that the chordal distance between two $m$-planes is equal to
\be
\label{generalized chordal distance}
d_c(U,V)=\frac{1}{\sqrt{2}}\norm{\bP_U-\bP_V}.
\ee
Moreover, $G_{0,n}(\R),G_{1,n}(\R),\hdots,G_{n,n}(\R)$ can also be embedded into a larger sphere in $\R^{D+1}$, i.e., for all subspaces $V$ of the ambient space $\R^n$ we have
\be\label{large sphere}
\norm{\bP_V-\frac{1}{2}\bI_n}^2=\frac{1}{4}n.
\ee
Note that $G_{m,n}(\R)$ is the intersection of this sphere with the plane described by $\text{tr}(\bP_V)=m$, which is characterized by \eqref{smallsphere}  (see \cite[Figure 10]{ conway1996packing}). Since the Frobenius norm induces a  metric on the set of all $n\times n$ matrices, regardless of whether they are projection matrices or not, one can use \eqref{generalized chordal distance} to generalize the notion of chordal distance to subspaces of different dimensions. This generalized distance is the Euclidean norm of the \emph{chord} connecting the  points associated with $U$ and $V$ on the  sphere characterized by \eqref{large sphere} normalized by $\sqrt{2}$. Note that, this definition coincides with \eqref{chordal distance deff} if $U$ and $V$ have equal dimensions. The proposed generalization of the  chordal distance definition that  also includes subspaces with different dimensions is similar to the  one considered in \cite[Definition 1]{ashikhmin2010grassmannian}. The only minor difference is that the one considered in \cite[Definition 1]{ashikhmin2010grassmannian} has an extra multiplicative factor of $\sqrt{2}$.

Note also that principal angles, used in the definition of chordal distance, do not depend on the choice of basis for the ambient space. In other words, roughly speaking, the chordal distance is invariant under rotation of subspaces. This is shown more formally in the following lemma. 


\begin{lemma}
\label{rotation invariant}
Let $U,V \in \cP(W)$. Given any two orthonormal bases for $W$, namely $\{\boldsymbol{e}_1,\hdots,\boldsymbol{e}_n\}$ and $\{\boldsymbol{e}'_1 ,\hdots,\boldsymbol{e}'_n\}$, referred to as basis $1$ and basis $2$, respectively, we have  
$$
\norm{\bP_U-\bP_V}=\norm{\bP'_U-\bP'_V},
$$
where $\bP_{V}$ and $\bP'_{V}$ are matrix representations for the orthogonal projection operator on $V$ in basis $1$ and basis $2$, respectively.
\end{lemma}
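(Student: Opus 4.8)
The plan is to exploit the fact that changing from basis $1$ to basis $2$ is implemented by a single unitary (orthogonal, in the real case) matrix $\bQ$, and that this change of basis conjugates every projection operator by $\bQ$; since the Frobenius norm is unitarily invariant, the norm of the difference is unchanged. More precisely, let $\bQ$ be the $n\times n$ change-of-basis matrix whose rows express the vectors $\boldsymbol{e}'_i$ in terms of the basis $\{\boldsymbol{e}_1,\hdots,\boldsymbol{e}_n\}$; since both are orthonormal bases, $\bQ$ satisfies $\bQ\bQ^{\text{H}}=\bI_n$.

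First I would establish the conjugation identity $\bP'_V=\bQ\,\bP_V\,\bQ^{\text{H}}$ for every subspace $V\in\cP(W)$. To see this, pick an orthonormal matrix $\bZ$ spanning $V$ with respect to basis $1$, so that $\bP_V=\bZ^{\text{H}}\bZ$; then the coordinates of the same spanning vectors in basis $2$ are given by the rows of $\bZ\bQ^{\text{H}}$, which is again orthonormal since $(\bZ\bQ^{\text{H}})(\bZ\bQ^{\text{H}})^{\text{H}}=\bZ\bQ^{\text{H}}\bQ\bZ^{\text{H}}=\bZ\bZ^{\text{H}}=\bI_m$. Hence $\bP'_V=(\bZ\bQ^{\text{H}})^{\text{H}}(\bZ\bQ^{\text{H}})=\bQ\bZ^{\text{H}}\bZ\bQ^{\text{H}}=\bQ\bP_V\bQ^{\text{H}}$, and the same holds for $U$. (One should briefly note that $\bP'_V$ so defined is indeed the matrix of the orthogonal projection onto $V$ expressed in basis $2$, which is immediate from the transformation rule for operators under an orthonormal change of basis.)

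Next I would conclude by direct computation: $\bP'_U-\bP'_V=\bQ(\bP_U-\bP_V)\bQ^{\text{H}}$, so by the cyclic property of the trace,
\[
\norm{\bP'_U-\bP'_V}^2=\text{tr}\!\parenv{\bQ(\bP_U-\bP_V)^{\text{H}}\bQ^{\text{H}}\bQ(\bP_U-\bP_V)\bQ^{\text{H}}}=\text{tr}\!\parenv{(\bP_U-\bP_V)^{\text{H}}(\bP_U-\bP_V)}=\norm{\bP_U-\bP_V}^2,
\]
where I used $\bQ^{\text{H}}\bQ=\bI_n$ and $\text{tr}(\bQ\bM\bQ^{\text{H}})=\text{tr}(\bM)$. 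Taking square roots gives the claim.

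I do not expect any serious obstacle here; the only point requiring a little care is the bookkeeping in the first step — making sure the change-of-basis matrix acts on the coordinate matrix $\bZ$ on the correct side and with the correct conjugate/transpose, and confirming that the resulting $\bQ\bP_V\bQ^{\text{H}}$ is genuinely the projection matrix in basis $2$ rather than some other representation. Once the conjugation identity $\bP'_V=\bQ\bP_V\bQ^{\text{H}}$ is in hand, unitary invariance of the Frobenius norm finishes the proof in one line.
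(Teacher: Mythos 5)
Your proof is correct and follows essentially the same route as the paper: both express the orthonormal spanning matrices in the new basis via right-multiplication by the unitary change-of-basis matrix, conclude that the projection matrices transform by unitary conjugation, and finish with the cyclic property of the trace (unitary invariance of the Frobenius norm). The only difference is cosmetic bookkeeping about whether the conjugation appears as $\bQ(\cdot)\bQ^{\text{H}}$ or $\bQ^{\text{H}}(\cdot)\bQ$, which does not affect the argument.
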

\begin{proof}
Let  $\bZ$ and $\bT$ be  matrices with orthonormal rows, represented in basis $1$, which  span $U$ and $V$, respectively. Then, they are represented in basis $2$ as follows:
$$
\bZ'= \bZ \bQ, \quad \bT'= \bT \bQ,
$$
for a unitary matrix $\bQ$. Then we have the following series of equalities by noting that $\text{tr}(\bA \bB)=\text{tr}(\bB\bA)$ for any two matrices $\bA$ and $\bB$ such that both $\bB\bA$ and $\bA\bB$ are well-defined and that the projection matrices are Hermitian: 
\begin{align*}
   &\norm{\bP'_U-\bP'_V}=\norm{\bQ^{\text{H}} (\bP_{V}-\bP_{U} ) \bQ}\\
   &=\sqrt{\text{tr}(\bQ^{\text{H}} (\bP_{V}-\bP_{U}) \bQ \bQ^{\text{H}} (\bP_{V}-\bP_{U}) \bQ )}\\
   &=\sqrt{\text{tr}(\bQ^{\text{H}}(\bP_{V}-\bP_{U})^2\bQ)}=\sqrt{\text{tr}(\bQ\bQ^{\text{H}}(\bP_{V}-\bP_{U})^2)}\\
   &=\sqrt{\text{tr}((\bP_{V}-\bP_{U})^2)}=\norm{\bP_U-\bP_V},
\end{align*}
which complete the proof. 
\end{proof}

The generalized chordal distance, discussed above, is further modified to arrive at a new notion of \textit{distance} over $\cP(W)$, defined as follows. 
\begin{definition}
\label{dist-def}
The \textit{distance} $d:\cP(W)\times \cP(W)\xrightarrow{} \R$ is defined as
\be
\label{d function}
d(U,V)\,\deff\,\norm{\bP_U-\bP_V}^2=\text{tr}\bigl((\bP_U-\bP_V)^2\bigr),
\ee
where $U,V \in \cP(W)$ and $\bP_U, \bP_V$ are the projection matrices associated to $U,V$, respectively. 
\end{definition}

Note that \Lref{rotation invariant} implies that $d(.,.)$ is well-defined. Note also that $d(.,.) = 2d_c(.,.)^2$ by \eqref{generalized chordal distance} and \eqref{d function}  and, equivalently, is equal to the square of the metric considered in \cite[Definition\,1]{ashikhmin2010grassmannian}. 
It is shown in \Lref{dist-quasidist-relation} in the appendix that the square of a metric is a $2$-quasimetric, where a quasimetric is defined in Section\,\ref{notation}. Hence, $d(.,.)$ is a $2$-quasimetric. It is further shown in \Lref{sim_diag} in the appendix that for $U,V,T \in \cP(W)$, $d(.,.)$ satisfies the triangle inequality, i.e., $\sigma = 1$ in \eq{quasi-triangle-ineq}, as long as $P_U$ and $P_V$ are simultaneously diagonalizable, i.e., one can find a basis in which both $P_U$ and $P_V$ are diagonal matrices. This property is later utilized to characterize the error-and-erasure correction capability of codes used over analog operator channels in terms of their \textit{minimum distance} the same way it is done given an underlying  metric. Hence, we refer to $d(.,.)$ as a distance through the rest of this paper keeping in mind that it is indeed a $2$-quasimetric.

An equivalent expression for the distance $d(.,.)$ is derived in the following lemma.

\begin{lemma}
\label{another expression for chordal distance}
For any $\Span{\bZ},\Span{\bT} \in G_{m,n}(\Lf)$, where the rows of $\bZ$ and $\bT$ are orthonormal, we have
$$
d(\Span{\bZ},\Span{\bT})=2(m-\norm{\bZ \bT^{\text{H}}}^2).
$$
\end{lemma}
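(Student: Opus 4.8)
The plan is a direct computation starting from the definition of $d(\cdot,\cdot)$ in \eqref{d function} and substituting the explicit form of the projection matrices. Write $\bP_U=\bZ^{\text{H}}\bZ$ and $\bP_V=\bT^{\text{H}}\bT$ for $U=\Span{\bZ}$, $V=\Span{\bT}$, which are the orthogonal projectors onto $U$ and $V$ as recalled before \eqref{smallsphere}. Then expand
\[
d(\Span{\bZ},\Span{\bT})=\text{tr}\bigl((\bP_U-\bP_V)^2\bigr)=\text{tr}(\bP_U^2)-2\,\text{tr}(\bP_U\bP_V)+\text{tr}(\bP_V^2).
\]

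First I would handle the diagonal terms. Since $\bP_U$ is an orthogonal projection it is idempotent, $\bP_U^2=\bP_U$, so $\text{tr}(\bP_U^2)=\text{tr}(\bZ^{\text{H}}\bZ)=\text{tr}(\bZ\bZ^{\text{H}})=\text{tr}(\bI_m)=m$, using the cyclic property of the trace and $\bZ\bZ^{\text{H}}=\bI_m$; identically $\text{tr}(\bP_V^2)=m$. For the cross term, again by cyclicity,
\[
\text{tr}(\bP_U\bP_V)=\text{tr}(\bZ^{\text{H}}\bZ\bT^{\text{H}}\bT)=\text{tr}\bigl((\bZ\bT^{\text{H}})(\bZ\bT^{\text{H}})^{\text{H}}\bigr)=\norm{\bZ\bT^{\text{H}}}^2,
\]
where the last equality is the definition of the Frobenius norm \eqref{Frobnius norm deff}. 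Combining the three pieces gives $d(\Span{\bZ},\Span{\bT})=m-2\norm{\bZ\bT^{\text{H}}}^2+m=2\bigl(m-\norm{\bZ\bT^{\text{H}}}^2\bigr)$, as claimed.

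There is essentially no hard step here: the only things used are idempotency of orthogonal projectors, the cyclic invariance of the trace, the orthonormality condition $\bZ\bZ^{\text{H}}=\bT\bT^{\text{H}}=\bI_m$, and the identity $\text{tr}(\bA^{\text{H}}\bA)=\norm{\bA}^2$. If anything deserves a word of care it is checking that $\bP_U=\bZ^{\text{H}}\bZ$ really is the orthogonal projector onto $U$ (so that Definition~\ref{dist-def} applies), but this is exactly the fact recalled right after \eqref{generalized chordal distance}, so it may simply be cited. One could also note in passing that this lemma is consistent with the identity $d(\cdot,\cdot)=2d_c(\cdot,\cdot)^2$ remarked after Definition~\ref{dist-def}, since $\norm{\bZ\bT^{\text{H}}}^2=\sum_{i=1}^m\cos^2(\theta_i)$ in terms of the principal angles, giving $2\sum_{i=1}^m\sin^2(\theta_i)$.
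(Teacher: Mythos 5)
Your proposal is correct and is essentially the same direct computation as the paper's proof: expand $\text{tr}\bigl((\bZ^{\text{H}}\bZ-\bT^{\text{H}}\bT)^2\bigr)$, use cyclicity of the trace with $\bZ\bZ^{\text{H}}=\bT\bT^{\text{H}}=\bI_m$, and identify the cross term with $\norm{\bZ\bT^{\text{H}}}^2$ via \eqref{Frobnius norm deff}. The extra remarks on idempotency and the principal-angle consistency check are fine but add nothing beyond the paper's argument.
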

\begin{proof}
By noting that $\text{tr}(\bZ\bZ^{\text{H}})=\text{tr}(\bT\bT^{\text{H}})=m$ and by using \eqref{d function} one can write
\begin{align*}
   &d(\Span{\bZ},\Span{\bT})=\text{tr}((\bZ^{\text{H}}\bZ-\bT^{\text{H}}\bT)^2)=2(m-\text{tr}(\bZ^{\text{H}}\bZ\bT^{\text{H}}\bT))\\
   &=2(m-\text{tr}(\bT\bZ^{\text{H}}\bZ\bT^{\text{H}}))=2(m-\text{tr}((\bZ \bT^{\text{H}})^{\text{H}}(\bZ \bT^{\text{H}})))\\
   &=2(m-\norm{\bZ\bT^{\text{H}}}^2),
\end{align*}
which completes the proof.
\end{proof}


\begin{definition}
\label{codedef}
An analog subspace code $\cC$ is a subset of $\cP(W)$. The size of $\cC$ is denoted by $|\cC|$. The minimum distance of $\cC$ is defined as
$$
\dmin(\cC)\,\deff\, \min_{U,V\in \cC, U\neq V} d(U,V),
$$
where $d(.,.)$ is defined in Definition\,\ref{dist-def}. The maximum dimension of the codewords of $\cC$ is denoted by
$$
l(\cC)\,\deff\,\max_{U\in \cC} \dim(U).
$$
The code $\cC$ is then referred to as an $[n,l(\cC),|\cC|,\dmin(\cC)]$ subspace code, where $n$ is the dimension of the ambient space $W$.
\end{definition}

If the dimension of all codewords in $\cC$ are equal, then the code is referred to as a \emph{constant-dimension} code, which is also called a code on Grassmannian or a Grassmann code in the literature. 

The \emph{dual} subspace code associated with subspace code $\cC$ is the code $\cC^{\perp}\,\deff\,\{U^{\perp}:U\in \cC \}$. \Lref{orthogonal subspaces distance mirroring} implies that $\dmin(\cC^\perp)=\dmin(\cC)$. Note that if $\cC$ is a constant-dimension code of type $[n,l,M,\dmin]$, then $\cC^\perp$ is a constant-dimension code of type $[n,n-l,M,\dmin]$.

\begin{definition}
\label{code parameters}
Let $\cC$ be an $[n,l,M,\dmin(\cC)]$ subspace code. The normalized weight $\lambda$, the rate $R$, and the normalized minimum distance $\delta$ of $\cC$ are defined as follows:
$$
\lambda\,\deff\,\frac{l}{n}, \quad R\,\deff\,\frac{\ln{M}}{n}, \quad \delta\,\deff\, \frac{\dmin(\cC)}{2l}.
$$
\end{definition}

Note that the normalized weight $\lambda$ and the normalized minimum distance $\delta$ are always between $0$ and $1$. However, while designing constant-dimension codes one can limit the attention to $\lambda \in [0,\frac{1}{2}]$. This is because for any code $\cC$ with $l>\frac{n}{2}$, there exists a dual code $\cC^\perp$ with $l<\frac{n}{2}$ and having the same distance properties.

As in conventional block codes, one can associate a minimum distance decoder to a subspace code $\cC$, e.g., when used for communication over an analog operator channel, in order to recover from subspace errors and erasures. Such a decoder returns the nearest codeword $V \in \cC$ given $U \in \cP(W)$ as its input, i.e., for any $V'\in \cC$, $d(U,V)\leq d(U,V')$. The following lemma plays a key rule in relating the minimum distance of $\cC$ to its error-and-erasure correction capability under minimum distance decoding.

\begin{lemma}
\label{inequality for operator channel}
    Let $U,V \in\cP(W)$ denote the input and the output of an analog operator channel, respectively, with the relation specified in \eqref{channelIOrelation}. Then for any $T\in \cP(W)$ we have
    \be
    \label{tri-ineq-for-opch}
    d(U,T)\leq \rho+t+d(V,T),
    \ee
    where $\rho$ and $t$ denote the dimension of erasures and errors, respectively, as specified in Definition\,\ref{operatorchannel}.
\end{lemma}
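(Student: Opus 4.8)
The plan is to compare both $U$ and $V$ against the intermediate subspace $U'\deff\cH_k(U)$, which by Definition\,\ref{operatorchannel} satisfies $U'\subseteq U$ and, because $V=\cH_k(U)\oplus E$, also $U'\subseteq V$. Since $U'$ lies inside $U$, the projections split orthogonally as $\bP_U=\bP_{U'}+\bP_A$, where $A$ is the orthogonal complement of $U'$ inside $U$; in particular $\bP_A\bP_{U'}=\zero$ and $\dim A=\dim U-\dim U'$. As $\dim U'=\min(\dim U,k)$ and $\dim U\le n$, this gives $\dim A\le n-k=\rho$, with $\dim A=0$ in the degenerate branch $\dim U\le k$ where $\cH_k(U)=U$. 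In the same way $\bP_V=\bP_{U'}+\bP_B$, with $B$ the orthogonal complement of $U'$ inside $V$, so $\bP_B\bP_{U'}=\zero$; and since $E\cap U=\{\zero\}$ forces $E\cap U'=\{\zero\}$, we have $\dim V=\dim U'+t$, hence $\dim B=\dim V-\dim U'=t$.

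Then I would expand $d(U,T)-d(V,T)$ directly from the definition $d(X,T)=\norm{\bP_X-\bP_T}^2$ of Definition\,\ref{dist-def}. Setting $\bM\deff\bP_{U'}-\bP_T$ (a Hermitian matrix),
\begin{align*}
d(U,T)-d(V,T)&=\norm{\bM+\bP_A}^2-\norm{\bM+\bP_B}^2\\
&=\dim A-\dim B+2\,\text{tr}(\bM\bP_A)-2\,\text{tr}(\bM\bP_B).
\end{align*}
Because $\bP_A\bP_{U'}=\bP_{U'}\bP_A=\zero$ we get $\text{tr}(\bM\bP_A)=-\text{tr}(\bP_T\bP_A)\le 0$, the trace of a product of two positive semidefinite matrices being nonnegative. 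Likewise $\text{tr}(\bM\bP_B)=-\text{tr}(\bP_T\bP_B)$, and $0\le\text{tr}(\bP_T\bP_B)\le\dim B$ (indeed $\text{tr}(\bP_T\bP_B)$ equals the sum of the squared cosines of the principal angles between $B$ and $T$, of which there are at most $\min(\dim B,\dim T)$), so $-2\,\text{tr}(\bM\bP_B)\le 2\dim B$. Substituting, and using $\dim A\le\rho$ and $\dim B=t$,
$$
d(U,T)-d(V,T)\le \dim A-\dim B+0+2\dim B=\dim A+\dim B\le\rho+t,
$$
which is exactly \eqref{tri-ineq-for-opch}.

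I do not anticipate a real obstacle, but the one choice that makes the argument work cleanly is to route the comparison through $U'=\cH_k(U)$ instead of attempting a triangle-type bound directly among $U$, $V$, $T$: the projections $\bP_U,\bP_V,\bP_T$ need not be simultaneously diagonalizable, so only the weaker $2$-quasimetric inequality (\Lref{dist-quasidist-relation}) would apply and it would cost a factor of two. Going through $U'$ instead makes the cross terms containing $\bP_{U'}$ vanish identically via $\bP_A\bP_{U'}=\bP_B\bP_{U'}=\zero$, leaving only traces of products of orthogonal projections, whose sign and magnitude are pinned down by the elementary bound $0\le\text{tr}(\bP\bQ)\le\min(\rank\bP,\rank\bQ)$ for orthogonal projections $\bP,\bQ$. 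The only item needing a little care is the degenerate branch $\dim U\le k$, in which $\cH_k(U)=U$ and hence $A=\{\zero\}$; the computation above still goes through verbatim with $\dim A=0\le\rho$.
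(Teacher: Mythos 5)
Your proof is correct, and it shares the paper's key structural idea: comparing through the intermediate subspace $U'=\cH_k(U)$, which sits inside both $U$ and $V$. The difference is in execution. The paper applies its appendix machinery twice: since $\bP_U,\bP_{U'}$ and $\bP_{U'},\bP_V$ are simultaneously diagonalizable, \Lref{sim_diag} gives the genuine triangle inequality $d(U,T)\leq d(U,U')+d(U',T)$ and $d(U',T)\leq d(U',V)+d(V,T)$, and \Lref{direct sum related to distance} supplies $d(U,U')\leq\rho$ and $d(U',V)=t$. You instead inline a single self-contained computation: splitting $\bP_U=\bP_{U'}+\bP_A$ and $\bP_V=\bP_{U'}+\bP_B$ and expanding $\norm{\bM+\bP_A}^2-\norm{\bM+\bP_B}^2$ with $\bM=\bP_{U'}-\bP_T$, then controlling the cross terms by the elementary facts $\mathrm{tr}(\bP_T\bP_A)\geq 0$ and $\mathrm{tr}(\bP_T\bP_B)\leq \dim B$ for orthogonal projections. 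This buys you independence from the appendix lemmas (in effect you reprove the two triangle-inequality steps at once), and it is slightly more careful on one point: you correctly note $\dim A\leq\rho$, including the degenerate branch $\dim U\leq k$, whereas the paper's statement ``$d(U,U')=\rho$'' is really an inequality in general; the final bound is of course the same either way.
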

\begin{proof}
    Let $U'=\cH_k(U)$. Then $d(U,U')=\rho$ by \Lref{direct sum related to distance}. Also, as shown in the proof of \Lref{direct sum related to distance}, $P_U$ and $P_{U'}$ are simultaneously diagonalizable. Hence, by \Lref{sim_diag} we have
    \be\label{triangle inequality for erasure}
    d(U,T)\leq \rho+d(U',T),
    \ee
    for any $T\in \cP(W)$. Moreover, $V=U'\oplus E$, where $E$ denotes the error space with $\dim(E)=t$. Using the same argument $P_{U'}$ and $P_V$ are also simultaneously diagonalizable. Similarly, by \Lref{sim_diag} we have  
    \be\label{triangle inequality for error}
    d(U',T)\leq t+d(V,T),
    \ee
    for any $T\in \cP(W)$. Combining \eqref{triangle inequality for erasure} with \eqref{triangle inequality for error} completes the proof.
\end{proof}

\begin{theorem}
\label{error correction capability of codes related to minimum distance}
Consider a subspace code $\cC$ used for communication over an analog operator channel, as defined in Definition\,\ref{operatorchannel}, i.e., the input to the channel is $U \in \cC$. Let $t$ and $\rho$ denote the dimension of errors and erasures, respectively. Then the minimum distance decoder successfully recovers the transmitted codeword $U \in \cC$ from the received subspace $V$ if
\be
\label{correct decoding condition}
2(\rho+t)<\dmin(\cC),
\ee
where $\dmin(\cC)$ is the minimum distance of $\cC$ defined in Definition\,\ref{codedef}. 
\end{theorem}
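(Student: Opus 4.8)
The plan is to argue by contradiction in the standard coding-theory fashion, leveraging \Lref{inequality for operator channel} as the analog of the triangle inequality. Suppose the channel input is $U \in \cC$ and the received subspace is $V$, with $\rho$ erasures and $t$ errors, and suppose the minimum distance decoder outputs some codeword $\widehat{U} \in \cC$ with $\widehat{U} \neq U$. By definition of minimum distance decoding, $d(V,\widehat{U}) \leq d(V,U)$. I would first bound $d(V,U)$ from above: applying \Lref{inequality for operator channel} with $T = U$ gives $d(U,U) \leq \rho + t + d(V,U)$, i.e. $d(V,U) \geq -(\rho+t)$, which is the wrong direction; instead one should observe directly that $d(V,U) \leq \rho + t$. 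This last bound follows from the same simultaneous-diagonalizability chain used inside the proof of \Lref{inequality for operator channel}: with $U' = \cH_k(U)$ one has $d(U,U') = \rho$, $d(U',V) = t$ (since $V = U' \oplus E$ with $\dim E = t$), and because $P_U, P_{U'}$ and $P_{U'}, P_V$ are each simultaneously diagonalizable, \Lref{sim_diag} gives genuine triangle inequalities, so $d(U,V) \leq d(U,U') + d(U',V) = \rho + t$.

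Next I would bound $d(V,\widehat{U})$ from below using $\widehat{U} \neq U$. Apply \Lref{inequality for operator channel} with $T = \widehat{U}$ to get $d(U,\widehat{U}) \leq \rho + t + d(V,\widehat{U})$, hence $d(V,\widehat{U}) \geq d(U,\widehat{U}) - (\rho+t) \geq \dmin(\cC) - (\rho+t)$, where the last step uses that $U$ and $\widehat{U}$ are distinct codewords so $d(U,\widehat{U}) \geq \dmin(\cC)$. Combining the two bounds with the decoding inequality $d(V,\widehat{U}) \leq d(V,U)$ yields
\[
\dmin(\cC) - (\rho+t) \;\leq\; d(V,\widehat{U}) \;\leq\; d(V,U) \;\leq\; \rho + t,
\]
so $\dmin(\cC) \leq 2(\rho+t)$, contradicting the hypothesis \eqref{correct decoding condition}. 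Therefore no such $\widehat{U} \neq U$ exists, and the decoder returns $U$.

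The steps are all routine once the right lemmas are in hand; the only delicate point — and the part I would double-check — is the upper bound $d(U,V) \leq \rho + t$. It does not follow from the $2$-quasimetric property alone (which would only give $d(U,V) \leq 2(\rho+t)$, too weak), but rather from the fact that in the analog operator channel the erasure operator $\cH_k$ and the error addition each preserve simultaneous diagonalizability of the relevant projection pairs, so \Lref{sim_diag} upgrades the relaxed inequality to an honest triangle inequality along this particular chain. Once that is secured, the argument is the familiar "half the minimum distance" bound, here appearing as $2(\rho+t) < \dmin(\cC)$ because the distance $d(\cdot,\cdot)$ is essentially the squared chordal distance and each unit of erasure or error contributes one unit to it rather than a fractional amount. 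I would present the contradiction cleanly and cite \Lref{inequality for operator channel} and \Lref{sim_diag} at the two places they are used.
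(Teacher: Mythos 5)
Your proposal is correct and follows essentially the same route as the paper: the two key bounds, $d(U,V)\leq \rho+t$ and $d(V,\widehat{U})\geq \dmin(\cC)-(\rho+t)$ via \Lref{inequality for operator channel}, are combined exactly as in the paper's proof (your contradiction framing is logically equivalent to the paper's direct comparison $d(V,T)>\rho+t\geq d(U,V)$). The only inefficiency is your re-derivation of $d(U,V)\leq\rho+t$ from the simultaneous-diagonalizability chain of \Lref{sim_diag}: the paper gets it immediately by substituting $T=V$ in \Lref{inequality for operator channel}, which you overlooked when you tried $T=U$ instead, though your detour is itself valid since it simply replays the lemma's internal argument.
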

\begin{proof}
By \Lref{inequality for operator channel} we have
\be\label{inequality IO}
d(U,T)\leq \rho+t+d(V,T),
\ee
for any $T\in \cP(W)$. In particular,
\be \label{distance UV}
d(U,V)\leq \rho+t,
\ee
by letting $T=V$. Now, let $T \in \cC$ be a codeword other than $U$. By \eqref{correct decoding condition} and the definition of minimum distance we have
\be
d(U,T) \geq \dmin(\cC) > 2(\rho+t).
\ee
This together with \eqref{inequality IO} and \eqref{distance UV} yields
\be\label{distance VT}
d(V,T) > \rho+t \geq d(U,V).
\ee
Hence, the minimum-distance decoder returns $U$.
\end{proof}

\Tref{error correction capability of codes related to minimum distance} implies that erasures and errors have equal costs in the subspace domain as far as the minimum-distance decoder is concerned. In other words, the minimum-distance decoder for a code $\cC$ can correct up to $\floor{\frac{\dmin(\cC)-1}{2}}$ errors and erasures.

\noindent
{\bf Remark\,2.} If one uses the chordal distance, instead of the distance $d(.,.)$ defined in Definition\,\ref{dist-def}, and follows the similar arguments as we did in this section, a result similar to \Tref{error correction capability of codes related to minimum distance} can be obtained while the condition in \eqref{correct decoding condition} is replaced by $\sqrt{2}(\sqrt{\rho} + \sqrt{t})$ being strictly less than the minimum chordal distance of the code. Since $d(.,.) = 2d_c(.,.)^2$, where $d_c(.,.)$ is the generalized chordal distance, this condition can be expressed in terms of $\dmin(\cC)$ as follows:
\be
\label{new_dist_cond}
4(\sqrt{\rho} + \sqrt{t})^2<\dmin(\cC).
\ee
Note that the left hand side of \eqref{new_dist_cond} is greater than that of \eqref{correct decoding condition} by a multiplicative factor that is between $2$ and $4$. This shows the clear advantage in using the new distance $d(.,.)$ instead of the chordal distance in characterizing the error-and-erasure correction capability of analog subspace codes. The advantage is due to the fact that although $d(.,.)$ does not always satisfy the triangle inequality, it exhibits properties of a  metric when dealing with inputs and outputs of analog operator channels.


\section{Robustness Against Additive Noise}
\label{sec:four}

In this section, we analyze the robustness of the analog operator channel in the presence of an additive noise.

The additive noise is denoted by $\bN\in \C^{l\times n}$, referred to as the \emph{noise matrix}. In the presence of the additive noise, the transform equation described in \eqref{cheqs} is extended as follows:
\be
\label{matrixchannelwithnoise}
\bY_{l \times n}=\bH_{l\times m}\bX_{m \times n}+\bG_{l \times t}\bE_{t \times n} + \bN_{l \times n}.
\ee
More specifically, the effect of all the noise terms added to the blocks across the wireless network is included in the noise matrix $\bN$. For ease of notation, let $\bA$ denote the term $\bH\bX+\bG\bE$, consisting of terms associated to the transmitted blocks as well as the interference blocks, referred to as the \emph{signal matrix}. 

In the rest of this section, we aim at characterizing the perturbation imposed by the additive noise in terms of the subspace distance. In other words, we derive bounds on subspace distance between the signal matrix and the signal matrix perturbed by noise, i.e., $d(\Span{\bA},\Span{\bA+\bN})$, in terms of various characteristics of both the noise and the signal matrix.  

First, we consider the case where $\bA$ is full row rank. In a sense, this implies that the receiver does not \textit{oversample} from the network. In this case, we show that the subspace distance caused by the noise is bounded in terms of two parameters: (1) the ratio of the maximum singular value, also referred to as the spectral norm, of the signal matrix to the Frobenius norm of the noise matrix, i.e.,
$\frac{\norm{\bA}_2}{\norm{\bN}}$; and (2) the spectral norm condition number of the signal matrix, defined as 
\be\label{condnumdeff}
\kappa(\bA)\,\deff\,\norm{\bA}_2|\hspace{-.2mm}|{\bA^+}|\hspace{-.2mm}|_2,
\ee
where $\bA^+$ is the pseudo-inverse of $\bA$, defined as $\bA^+\,\deff\, \bA^{\text{H}}(\bA\bA^{\text{H}})^{-1}$. It is well-known that $\kappa(\bA)=\frac{\sigma_{\max}}{\sigma_{min}}\geq 1$, where $\sigma_{\max}=\norm{\bA}_2$ and $\sigma_{\min}$ are the largest and the smallest singular values of $\bA$, respectively.


Our analysis is based on the analysis of perturbation of RQ-factorization, see., e.g., \cite{sun1991perturbation}. Recall, from the linear algebra literature, that the RQ-factorization of a given matrix $\bA \in \C^{l \times n}$ decomposes it as $\bA=\bR\bQ$, where $\bQ\in \C^{l\times n}$ is an orthonormal matrix with $\bQ\bQ^{\text{H}}=\bI_l$ and $\bR\in \C^{l\times l}$ is an upper triangular matrix with positive diagonal elements. It is well known that the RQ-factorization is unique \cite{golub1996matrix }. The following result relates the perturbation of $\bA$ to the perturbation of  $\bQ$ in its RQ-factorization. 

\begin{theorem}\cite[Theorem 1.6]{sun1991perturbation} (rephrased)
\label{qbound}
    Let  $\bA \in \C^{l \times n}$ be a full row rank matrix with RQ-factorization $\bA=\bR\bQ$. Then for any $\bE\in \C^{l\times n}$ that satisfies $\norm{\bA^+}_2\norm{\bE}_2<1$, we have the following RQ-factorization for $\bA+\bE$:
    $$
    \bA+\bE=(\bR+ \bR_{\Delta})(\bQ+\bQ_{\Delta}),
    $$ 
    with $(\bQ+\bQ_{\Delta})(\bQ+ \bQ_{\Delta})^{\text{H}}=\bI_l$ such that
    \begin{align}
        \norm{\bQ_{\Delta}}\leq \frac{(1+\sqrt{2})\kappa(\bA)}{1-\norm{\bA^+}_2\norm{\bE}_2}.\frac{\norm{\bE}}{\norm{\bA}_2}.
        \label{perturbation bound}
    \end{align}
    \end{theorem}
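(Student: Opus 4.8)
The plan is to reduce \eqref{perturbation bound} to two ingredients: a singular-value perturbation estimate controlling $\norm{\tilde\bR^{-1}}_2$, and a perturbation estimate for the triangular factor controlling $\bR_{\Delta}=\tilde\bR-\bR$; here $\bA+\bE=\tilde\bR\tilde\bQ$ denotes the RQ-factorization of the perturbed matrix, so $\bQ_{\Delta}=\tilde\bQ-\bQ$. First, by homogeneity — scaling $(\bA,\bE)\mapsto(c\bA,c\bE)$ leaves $\bQ_{\Delta}$, $\kappa(\bA)$ and both ratios in \eqref{perturbation bound} unchanged — I would normalize $\norm{\bA}_2=1$, so that $\norm{\bA^{+}}_2=\kappa(\bA)$ and $\sigma_{\min}(\bA)=1/\kappa(\bA)$. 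Since $\bA$ is full row rank, the elementary inequality $\sigma_{\min}(\bA+\bE)\geq\sigma_{\min}(\bA)-\norm{\bE}_2$ together with the hypothesis $\norm{\bA^{+}}_2\norm{\bE}_2<1$ gives $\sigma_{\min}(\bA+\bE)\geq(1-\norm{\bA^{+}}_2\norm{\bE}_2)/\norm{\bA^{+}}_2>0$; hence $\bA+\bE$ is full row rank and has a unique RQ-factorization, and, since $\tilde\bQ$ has orthonormal rows so that $\sigma_{\min}(\tilde\bR)=\sigma_{\min}(\tilde\bR\tilde\bQ)=\sigma_{\min}(\bA+\bE)$, we get $\norm{\tilde\bR^{-1}}_2\leq\norm{\bA^{+}}_2/(1-\norm{\bA^{+}}_2\norm{\bE}_2)$. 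This is exactly the denominator appearing in \eqref{perturbation bound}.

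Next I would extract $\bQ_{\Delta}$. Writing $\tilde\bR=\bR(\bI+Y)$ with $Y\,\deff\,\bR^{-1}\bR_{\Delta}$ — which is upper triangular with real diagonal because $\bR$ and $\tilde\bR$ have real positive diagonals — the identities $\tilde\bQ=\tilde\bR^{-1}(\bA+\bE)$ and $\bQ=\bR^{-1}\bA$ give the exact relation
\[
\bQ_{\Delta}=(\bI+Y)^{-1}\bigl(\bR^{-1}\bE-Y\bQ\bigr),
\]
so $\norm{\bQ_{\Delta}}\leq\norm{(\bI+Y)^{-1}}_2\bigl(\norm{\bR^{-1}}_2\norm{\bE}+\norm{Y}\bigr)$ since $\norm{\bQ}_2=1$, and everything comes down to bounding the Frobenius norm $\norm{Y}$. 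For this I would use the Gram relations $\bR\bR^{\text{H}}=\bA\bA^{\text{H}}$ and $\tilde\bR\tilde\bR^{\text{H}}=(\bA+\bE)(\bA+\bE)^{\text{H}}$; multiplying the difference by $\bR^{-1}$ on the left and $\bR^{-\text{H}}$ on the right and using $\bR^{-1}\bA=\bQ$ turns it into
\[
Y+Y^{\text{H}}=\bQ\bE^{\text{H}}\bR^{-\text{H}}+\bR^{-1}\bE\bQ^{\text{H}}+\bR^{-1}\bE\bE^{\text{H}}\bR^{-\text{H}}-YY^{\text{H}}\,=:\,G_0,
\]
with $G_0$ Hermitian. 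Now $Y$ is the unique upper-triangular matrix with real diagonal obeying $Y+Y^{\text{H}}=G_0$, i.e.\ $Y$ is recovered from $G_0$ by keeping its strictly upper part and halving its diagonal; splitting $G_0$ into its strictly upper, strictly lower, and diagonal pieces shows this map has Frobenius norm $\tfrac1{\sqrt2}$ on Hermitian matrices, so $\norm{Y}\leq\tfrac1{\sqrt2}\norm{G_0}$ — this is where the constant $\tfrac1{\sqrt2}$, and ultimately $1+\sqrt2$, originates. Using $\norm{\bR^{-1}}_2=\norm{\bA^{+}}_2$ and $\norm{\bQ}_2=1$, the terms of $G_0$ linear in $\bE$ have Frobenius norm at most $2\norm{\bA^{+}}_2\norm{\bE}$, so to leading order $\norm{Y}\lesssim\sqrt2\,\norm{\bA^{+}}_2\norm{\bE}$ and therefore $\norm{\bQ_{\Delta}}\lesssim(1+\sqrt2)\,\norm{\bA^{+}}_2\norm{\bE}=(1+\sqrt2)\,\kappa(\bA)\,\norm{\bE}/\norm{\bA}_2$ — precisely the leading term of \eqref{perturbation bound}, the $1$ coming from the $\bR^{-1}\bE$ term and the $\sqrt2$ from the $\tfrac1{\sqrt2}$ triangular-splitting factor acting on the two linear terms of $G_0$.

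The main obstacle is upgrading this leading-order computation to the exact bound \eqref{perturbation bound}: one must control the quadratic term $YY^{\text{H}}$ hidden inside $G_0$ and the factor $\norm{(\bI+Y)^{-1}}_2$, and show that the resulting loss is absorbed entirely into the factor $(1-\norm{\bA^{+}}_2\norm{\bE}_2)^{-1}$ already produced above, without enlarging the constant past $1+\sqrt2$ or raising the power of $\kappa(\bA)$. I would handle this by a contraction argument: view the equation as $Y=\Phi(Y)$, where $\Phi(Y)$ is the upper-triangular half of $\bigl(\bQ\bE^{\text{H}}\bR^{-\text{H}}+\bR^{-1}\bE\bQ^{\text{H}}+\bR^{-1}\bE\bE^{\text{H}}\bR^{-\text{H}}-YY^{\text{H}}\bigr)$; using $\norm{Y_1Y_1^{\text{H}}-Y_2Y_2^{\text{H}}}\leq(\norm{Y_1}+\norm{Y_2})\norm{Y_1-Y_2}$ and the hypothesis, $\Phi$ is a self-map and a contraction on a small Frobenius ball, which both establishes the claimed RQ-factorization of $\bA+\bE$ and yields a crude bound on $\norm{Y}$ that is then reinserted into the identity for $\bQ_{\Delta}$ to recover the stated constant and denominator. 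A recurring bookkeeping point is to keep the Frobenius norm on the ``numerator'' quantities $\bE,Y,\bQ_{\Delta},G_0$ and the spectral norm only on $\bA,\bA^{+},\bR^{-1},(\bI+Y)^{-1}$; conflating the two is exactly what would cost a spurious extra factor of $\kappa(\bA)$. I would also note that a pure ``angle between $\Span{\bA}$ and $\Span{\bA+\bE}$'' argument does not suffice, since $\bQ$ and $\tilde\bQ$ are the specific canonical RQ-bases rather than arbitrary orthonormal bases of the two row spaces, so the alignment of these bases — not just the gap between the subspaces — must be tracked.
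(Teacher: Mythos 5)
The paper does not prove this statement: it is quoted (rephrased) from Sun's Theorem 1.6 in the cited reference, so your attempt has to be judged on its own merits rather than against an in-paper argument. Your algebraic setup is sound and is indeed the standard skeleton for such results: Weyl's inequality gives existence and uniqueness of the perturbed RQ factorization together with $\norm{\widetilde{\bR}^{-1}}_2\le \norm{\bA^{+}}_2/(1-\norm{\bA^{+}}_2\norm{\bE}_2)$; the identity $\bQ_{\Delta}=(\bI+Y)^{-1}(\bR^{-1}\bE-Y\bQ)$ with $Y=\bR^{-1}\bR_{\Delta}$ is correct; the Gram relation for $Y+Y^{\text{H}}$ and the $1/\sqrt{2}$ Frobenius bound for the ``upper-triangular half'' map are correct; and this does explain where $1+\sqrt{2}$ comes from to first order.

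The genuine gap is the step you defer to ``bookkeeping'': upgrading the leading-order computation to the exact bound under only the hypothesis $\norm{\bA^{+}}_2\norm{\bE}_2<1$. Your contraction argument needs $\Phi$ to map a small Frobenius ball into itself, which forces a condition of the form $\norm{\bA^{+}}_2\norm{\bE}\lesssim c$ for an absolute constant $c$ \emph{in the Frobenius norm} of $\bE$; that does not follow from the stated hypothesis, since $\norm{\bE}$ can exceed $\norm{\bE}_2$ by a factor up to $\sqrt{l}$. Similarly, the prefactor $\norm{(\bI+Y)^{-1}}_2$ cannot be pushed to $1$ under the hypothesis: the two available bounds are $1/(1-\norm{Y}_2)$, which requires $\norm{Y}_2<1$ (not guaranteed), and $\norm{\widetilde{\bR}^{-1}}_2\norm{\bR}_2\le\kappa(\bA)/(1-\norm{\bA^{+}}_2\norm{\bE}_2)$, which injects a second factor of $\kappa(\bA)$ — exactly the loss you say must be avoided. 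So the static fixed-point route, as sketched, either needs extra smallness assumptions or degrades the constant/condition-number dependence. The standard way (and, to my understanding, Sun's) to get the exact constant with the clean denominator is a continuity argument along the path $\bA+t\bE$, $t\in[0,1]$: all $\bA+t\bE$ are full row rank, the factors $\bQ(t),\bR(t)$ are differentiable, the same triangular/skew-Hermitian splitting applied to the derivative equation yields $\norm{\dot{\bQ}(t)}\le(1+\sqrt{2})\,\norm{\bE}\,\norm{\bR(t)^{-1}}_2$, and then one integrates $\norm{\bR(t)^{-1}}_2\le\norm{\bA^{+}}_2/(1-t\norm{\bA^{+}}_2\norm{\bE}_2)$ and uses $-\ln(1-x)\le x/(1-x)$ to land on the stated bound. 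I would restructure your proof along those lines; your splitting lemma and norm identities all carry over.
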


An upper bound on the subspace distance caused by the noise is derived in the following theorem.

\begin{theorem}\label{perturbation bound}
 Let  $\bA \in \C^{l \times n}$ be a full row rank matrix. Then for any $\bE\in \C^{l\times n}$ that satisfies $\norm{\bA^+}_2\norm{\bE}_2<1$, we have
    \be\label{subspace perturbation}
    d(\Span{\bA},\Span{\bA+\bE})\leq 2\epsilon+\epsilon^2,
    \ee
    where
    \be\label{epsilon bound}
    \epsilon\,\deff\, ( \frac{(1+\sqrt{2})\kappa(\bA)}{1-\norm{\bA^+}_2\norm{\bE}_2}.\frac{\norm{\bE}}{\norm{\bA}_2})^2.
    \ee
\end{theorem}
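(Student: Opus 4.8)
The plan is to convert the bound on the perturbation of the orthonormal factor $\bQ$ from Theorem~\ref{qbound} into a bound on the subspace distance using Lemma~\ref{another expression for chordal distance}. The key observation is that $\Span{\bA}$ and $\Span{\bA+\bE}$ are exactly the row spaces spanned by the orthonormal matrices $\bQ$ and $\bQ+\bQ_{\Delta}$ appearing in the RQ-factorizations $\bA = \bR\bQ$ and $\bA+\bE = (\bR+\bR_\Delta)(\bQ+\bQ_\Delta)$, respectively. Since both $\bQ$ and $\bQ+\bQ_\Delta$ have $l$ orthonormal rows, Lemma~\ref{another expression for chordal distance} gives
\be
d(\Span{\bA},\Span{\bA+\bE}) = 2\bigl(l - \norm{\bQ(\bQ+\bQ_\Delta)^{\text{H}}}^2\bigr).
\ee
So the whole problem reduces to lower-bounding $\norm{\bQ(\bQ+\bQ_\Delta)^{\text{H}}}^2$ in terms of $\norm{\bQ_\Delta}$, for which Theorem~\ref{qbound} already supplies the estimate $\norm{\bQ_\Delta} \le \sqrt{\epsilon}$ (the square root of the quantity $\epsilon$ defined in \eqref{epsilon bound}).

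First I would expand $\norm{\bQ(\bQ+\bQ_\Delta)^{\text{H}}}^2 = \text{tr}\bigl(\bQ(\bQ+\bQ_\Delta)^{\text{H}}(\bQ+\bQ_\Delta)\bQ^{\text{H}}\bigr)$, using $(\bQ+\bQ_\Delta)(\bQ+\bQ_\Delta)^{\text{H}}=\bI_l$, hence $\bQ(\bQ+\bQ_\Delta)^{\text{H}}$ is a contraction and its squared Frobenius norm equals $\text{tr}(\bQ\bQ^{\text{H}}) - \text{tr}\bigl(\bQ\bQ_\Delta^{\text{H}}\bQ_\Delta\bQ^{\text{H}}\bigr)$ after substituting $\bQ+\bQ_\Delta$ and using $\bQ\bQ^{\text{H}}=\bI_l$; the cross terms involving $\bQ\bQ_\Delta^{\text{H}}$ alone appear but cancel when one writes things out carefully using the identity $\bQ_\Delta\bQ^{\text{H}} + \bQ\bQ_\Delta^{\text{H}} + \bQ_\Delta\bQ_\Delta^{\text{H}} = \boldsymbol{0}$, which follows from $(\bQ+\bQ_\Delta)(\bQ+\bQ_\Delta)^{\text{H}}=\bQ\bQ^{\text{H}}$. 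From this one gets $l - \norm{\bQ(\bQ+\bQ_\Delta)^{\text{H}}}^2 = \text{tr}\bigl(\bQ_\Delta\bQ^{\text{H}}\bQ\bQ_\Delta^{\text{H}}\bigr)$ (or an equal expression), which is bounded above by $\norm{\bQ_\Delta}^2$ since $\bQ^{\text{H}}\bQ$ is an orthogonal projection with operator norm $1$. Plugging in $\norm{\bQ_\Delta}^2 \le \epsilon$ and multiplying by $2$ would give $d(\Span{\bA},\Span{\bA+\bE}) \le 2\epsilon$, which is slightly stronger than the claimed $2\epsilon + \epsilon^2$.

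The minor subtlety — and the step I'd be most careful about — is matching the precise algebra so that the final bound comes out as $2\epsilon+\epsilon^2$ rather than $2\epsilon$; the extra $\epsilon^2$ term suggests the authors use a cruder intermediate step, e.g., bounding $\norm{\bQ(\bQ+\bQ_\Delta)^{\text{H}} - \bI_l}$ directly by $\norm{\bQ_\Delta}$ via $\bQ(\bQ+\bQ_\Delta)^{\text{H}} - \bQ\bQ^{\text{H}} = \bQ\bQ_\Delta^{\text{H}}$ so that $\norm{\bQ(\bQ+\bQ_\Delta)^{\text{H}} - \bI_l} \le \norm{\bQ_\Delta} \le \sqrt{\epsilon}$, and then observing $d(\Span{\bA},\Span{\bA+\bE}) = 2\bigl(l - \text{Re}\,\text{tr}(\bQ(\bQ+\bQ_\Delta)^{\text{H}})\bigr) + (\text{correction})$ and expanding $\norm{\bI_l - \bQ\bQ_\Delta^{\text{H}}}^2$ or a similar square to land the $2\epsilon+\epsilon^2$ form. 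Either way, the real content is entirely front-loaded: recognize that RQ-factorization identifies the row space with the $\bQ$-factor, invoke Theorem~\ref{qbound} verbatim, and invoke Lemma~\ref{another expression for chordal distance}; everything after that is a short trace computation with $\norm{\bQ_\Delta}\le\sqrt{\epsilon}$. I do not anticipate a genuine obstacle — only bookkeeping to reproduce the stated (slightly loose) constant.
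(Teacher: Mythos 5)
Your proposal is correct, and its backbone coincides with the paper's: identify $\Span{\bA}$ and $\Span{\bA+\bE}$ with the row spaces of the orthonormal factors $\bQ$ and $\bQ+\bQ_{\Delta}$ from the RQ-factorizations and invoke \Tref{qbound} so that $\norm{\bQ_{\Delta}}^2\leq \epsilon$. The difference lies only in the final conversion to subspace distance. The paper expands $d(\Span{\bA},\Span{\bA+\bE})=\norm{(\bQ+\bQ_{\Delta})^{\text{H}}(\bQ+\bQ_{\Delta})-\bQ^{\text{H}}\bQ}^2$ into the three terms $\bQ^{\text{H}}\bQ_{\Delta}$, $\bQ_{\Delta}^{\text{H}}\bQ$, $\bQ_{\Delta}^{\text{H}}\bQ_{\Delta}$ and bounds the last one via \Lref{BHB}; that quadratic term is exactly where the extra $\epsilon^2$ comes from. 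You instead go through \Lref{another expression for chordal distance} combined with the identity $\bQ\bQ_{\Delta}^{\text{H}}+\bQ_{\Delta}\bQ^{\text{H}}+\bQ_{\Delta}\bQ_{\Delta}^{\text{H}}=\boldsymbol{0}$; carried out carefully this yields the exact relation $l-\norm{\bQ(\bQ+\bQ_{\Delta})^{\text{H}}}^2=\norm{\bQ_{\Delta}}^2-\norm{\bQ_{\Delta}\bQ^{\text{H}}}^2$, not $\norm{\bQ_{\Delta}\bQ^{\text{H}}}^2$ as you wrote (test with $\bQ=(1,0)$ and $\bQ+\bQ_{\Delta}=(\cos\theta,\sin\theta)$ to see the discrepancy), but since either expression is at most $\norm{\bQ_{\Delta}}^2$ your conclusion stands: $d(\Span{\bA},\Span{\bA+\bE})\leq 2\norm{\bQ_{\Delta}}^2\leq 2\epsilon$, which is slightly stronger than and immediately implies the stated $2\epsilon+\epsilon^2$. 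So apart from the minor sign/bookkeeping slip in the intermediate identity, your route is sound, avoids \Lref{BHB} altogether, and there is no need to reverse-engineer the paper's looser constant.
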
 

\begin{proof}
Suppose that $\bQ$ and $\bQ_{\Delta}$ are derived according to \Tref{qbound}. Then we have
\begin{align}
    d(\Span{\bA},\Span{\bA+\bE})&=\norm{ (\bQ+\bQ_{\Delta})^{\text{H}}(\bQ+\bQ_{\Delta})-\bQ^{\text{H}}\bQ}^2\label{deff}\\
    &\leq \norm{\bQ^{\text{H}}\bQ_{\Delta}}^2+\norm{\bQ_{\Delta}^{\text{H}} \bQ}^2+\norm{\bQ_{\Delta}^{\text{H}} \bQ_{\Delta}}^2\label{triangle}\\
    &=2\norm{\bQ^{\text{H}}\bQ_{\Delta}}^2+\norm{\bQ_{\Delta}^{\text{H}} \bQ_{\Delta}}^2,\label{hermitian}
\end{align}
where \eqref{deff} is by \eqref{d function}, \eqref{triangle} follows from triangle inequality and \eqref{hermitian} is by observing that $\norm{\bA}=\norm{\bA^{\text{H}}}$ for any matrix $\bA$. Moreover, for the first term in \eqref{hermitian} we have
\begin{align}
 \norm{\bQ^{\text{H}} \bQ_{\Delta}}^2=\text{tr}(\bQ_{\Delta}^{\text{H}}\bQ\bQ^{\text{H}} \bQ_{\Delta})=\text{tr}(\bQ_{\Delta}^{\text{H}} \bQ_{\Delta})=\norm{\bQ_{\Delta}}^2,\label{first term}
 \end{align}
since $\bQ\bQ^{\text{H}}=I_l$. Applying \Lref{BHB}, in the appendix, to the second term in  \eqref{hermitian} results in 
\be
\label{second term}
\norm{\bQ_{\Delta}^{\text{H}} \bQ_{\Delta}}\leq \norm{\bQ_{\Delta}}^2.
\ee
Then \eqref{hermitian}, \eqref{first term}, and \eqref{second term} together with the result of \Tref{qbound} yield \eq{subspace perturbation}.
\end{proof}

The result of \Tref{perturbation bound} can be applied to upper bound the subspace distance $d(\Span{\bA},\Span{\bA+\bN})$ caused by the additive noise $\bN$ as long as the signal matrix $A$ is full row rank. If $A$ is not full row rank, which means that the receiver is somewhat \textit{oversampling} from the network, the addition of $\bN$, even with very small norm, may result in a large $d(\Span{\bA},\Span{\bA+\bN})$. This is because $\bA+\bN$ is full row rank with probability $1$ if entries of $N$ are Gaussian random variables. Consequently, by \Lref{direct sum related to distance},
\be
\label{noiserank}
d(\Span{\bA},\Span{\bA+\bN}) \geq l - \rank(\bA),
\ee
regardless of how small $\norm{\bN}$ is. The aim here is to characterize the error correction capability of subspace codes in the presence of additive noise, i.e., to extend the result of \Tref{error correction capability of codes related to minimum distance} to take into account the effect of the additive noise as well as subspace errors and erasures. In order to do so for the general case, where the signal matrix $\bA$ is not necessarily full row rank, one can model the effect of the noise partially as an interference of dimension $l - \rank(\bA)$ and partially as an addition of noise to a full row rank sub-matrix of the signal matrix, whose effect can be bounded using \Tref{perturbation bound}. This is elaborated through the rest of this section.

Let $r=\rank(\bA)$ and $r_d$ denote the rank-deficiency of $\bA$, i.e.,
\be
\label{rd-def}
r_d\,\deff\,l - r. 
\ee
Also, let $\bA_1$ be an $r\times n$ full row rank sub-matrix of $\bA$ and $\bA_2$ denote the sub-matrix of $\bA$ consisting of its remaining rows. Let $\bN_1$ and $\bN_2$ be sub-matrices of $\bN$ with row indices associated with row indices of $\bA_1$ and $\bA_2$, respectively. Without loss of generality one can write 
\be\label{decomposition}
 A=
\left[
\begin{array}{c}
\bA_1\\
\hline
\bA_2
\end{array}
\right],  \quad  N=
\left[
\begin{array}{c}
\bN_1\\
\hline
\bN_2
\end{array}
\right],
\ee
where both $\bA_1$ and $\bN_1$ are $r \times n$ matrices and both $\bA_2$ and $\bN_2$ are $r_d \times n$ matrices. Then we have the following theorem.

\begin{theorem}\label{perturbation bound for the general case}
    Let $\bA \in \C^{l\times n}$ with $\rank(\bA)=r$. Let $\bA_1 \in \C^{r\times n}$ denote a foll row rank sub-matrix of $\bA$. Then for any $\bN \in \C^{l \times n}$ that satisfies $\norm{\bA_1^+}_2\norm{\bN}_2<1$ we have
\be
\label{bound for the first term}
d(\Span{\bA}, \Span{\bA+\bN})\leq (\sqrt{r_d}+\sqrt{\Delta})^2,
\ee
where $r_d$ is the rank-deficiency of $\bA$, as defined in \eq{rd-def}. Also, 
\be
\label{Delta-def}
\Delta\,\deff\, 2\epsilon+\epsilon^2,
\ee
where
\be \label{epsilon'deff}
\epsilon\,\deff\,( \frac{(1+\sqrt{2})\kappa(\bA_1)}{1-\norm{\bA_1^+}_2\norm{\bN}_2}.\frac{\norm{\bN}}{\norm{\bA_1}_2})^2.
\ee
\end{theorem} 
\vspace{1mm}
\begin{proof} Let $\bA_1$, $\bA_2$, $\bN_1$, and $\bN_2$ be as specified in \eqref{decomposition}. Then we have the following:
  \begin{align}
      &d(\Span{\bA},\Span{\bA+\bN})^{\shalf}\\
      &=d(\Span{\bA_1},\Span{\bA+\bN})^{\shalf}\label{subspace equivalence}\\
      &\leq d(\Span{\bA_1}, \Span{\bA_1+\bN_1})^{\shalf}+ d(\Span{\bA_1+\bN_1},\Span{\bA+\bN})^{\shalf},    \label{triangle inequality }
  \end{align}
where \eqref{subspace equivalence} holds because $\Span{\bA}=\Span{\bA_1}$ and \eqref{triangle inequality } is by triangle inequality  for $d_c(.,.)$ and noting that $d(U,V)=2d_c(U,V)^2$. We will bound the two terms in \eqref{triangle inequality } separately. 

To bound the first term in \eqref{triangle inequality }, first note that $\norm{\bN_1}\leq \norm{\bN}$. Also, for the spectral norm, we have $\norm{\bN_1}_2\leq \norm{\bN}_2$ since adding a row to a rectangular matrix does not reduce its maximum singular value, see, e.g., \cite{hwang2004cauchy}. This together with \Tref{perturbation bound} yield the following upper bound on the first term in \eqref{triangle inequality }:
\be
\label{bound for the first term}
d(\Span{\bA_1}, \Span{\bA_1+\bN_1})\leq \Delta = 2\epsilon+\epsilon^2
\ee
where $\epsilon$ is specified in \eq{epsilon'deff}. 

For the second term in \eqref{triangle inequality } we have 
\begin{align}
\nonumber
     &d(\Span{\bA_1+\bN_1},\Span{\bA+\bN})\\
     &= d(\Span{\bA_1+\bN_1},\Span{\bA_1+\bN_1}+\Span{\bA_2+\bN_2})\label{sum of spaces}\\
     &\leq \rank(\Span{\bA_2+\bN_2})\leq r_d, \label{directsum }
\end{align}
where \eqref{sum of spaces} holds by the definition in \eq{plus-def}, and \eqref{directsum } follows from \Lref{direct sum related to distance} and by noting that for any $U,V \in \cP(W)$ one can always write $U+V=U\oplus V'$ for some $V'\in \cP(W)$ with $\dim(V')\leq \dim(V)$. The proof is complete by combining \eqref{bound for the first term} and \eqref{directsum } together with \eqref{triangle inequality }.
\end{proof}

\Tref{perturbation bound for the general case} shows that the additive noise affects the output of the analog operator channel in two ways. It, in a sense, \textit{rotates} the output subspace by a value upper bounded by $\Delta$ and also, implicitly, induces an extra interference term of dimension upper bounded by $r_d$ (For simplicity, we consider the worst case scenario where this dimension is $r_d$). This motivates us to define the \textit{noisy} analog operator channel as follows. First, we define a stochastic operator $\cR_\Delta$, referred to as the \emph{rotation operator}, which takes a subspace $U\in \cP(W)$ as the input and returns a random subspace $V \in \cP(W)$ with $\dim(V)=\dim(U)$ as the output in such a way that 
$$
d(U,V) \leq \Delta.
$$ 
\begin{definition}\label{noisy operator channel}
   A noisy analog operator channel associated with the analog ambient space $W$ is a channel with input $U$ and output $V$, where $U,V \in \cP(W)$, with the following input-output relation:
   \be
   \label{noisychannelIOrelation}
   V=\cR_\Delta\bigl(\cH_k(U)\oplus E\bigr)\oplus F,
   \ee
where $\cH_k(U)$ and $E$ induce subspace erasures and errors, respectively, as in the analog operator channel model, defined in \Dref{operatorchannel}, $\cR_\Delta$ is the rotation operation defined above, and $F$ is the implicit interference caused by the additive noise. 
\end{definition}

The following theorem extends the result of \Tref{error correction capability of codes related to minimum distance} to take into account the effect of the additive noise as well as the subspace errors and erasures. 

\begin{theorem}\label{minimum distance decoder in presence of the additive noise}
Consider a subspace code $\cC$ used for communication over a noisy analog operator channel, as defined in Definition\,\ref{noisy operator channel}, i.e., the input to the channel is $U \in \cC$. Let $t$, $\rho$, and $r_d$ denote the dimension of errors, erasures, and the implicit noise interference $F$, respectively. Then the minimum distance decoder successfully recovers the transmitted codeword $U \in \cC$ from the received subspace $V$ if
\be
\label{radius_with_noise}
\rho+t + (\sqrt{\rho+t+\Delta} + \sqrt{\Delta} + 2 \sqrt{r_d})^2 < \dmin(\cC).
\ee
\end{theorem}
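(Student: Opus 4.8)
The plan is to follow the structure of the proof of \Tref{error correction capability of codes related to minimum distance}, upgrading it so that it absorbs the two extra effects present in \eqref{noisychannelIOrelation}, namely the rotation $\cR_\Delta$ and the implicit interference $F$. The key technical device is to interleave two triangle inequalities: the sharp one for $d(.,.)$ (with constant $\sigma=1$) guaranteed by \Lref{sim_diag} whenever the two projection operators involved are simultaneously diagonalizable, and the ordinary triangle inequality for the metric $d_c(.,.)$, recalling that $d(.,.)=2d_c(.,.)^2$.

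First I would introduce the intermediate subspaces along the channel: set $U'=\cH_k(U)$, $U''=U'\oplus E$, and $U'''=\cR_\Delta(U'')$, so that $V=U'''\oplus F$. By \Lref{direct sum related to distance} we have $d(U,U')=\rho$ with $\bP_U,\bP_{U'}$ simultaneously diagonalizable, $d(U',U'')=t$ with $\bP_{U'},\bP_{U''}$ simultaneously diagonalizable, and $d(U''',V)\le r_d$ with $\bP_{U'''},\bP_V$ simultaneously diagonalizable; and by the definition of the rotation operator, $d(U'',U''')\le \Delta$. Applying \Lref{sim_diag} along the erasure link and then the error link — exactly as in the proof of \Lref{inequality for operator channel} — yields, for every $X\in\cP(W)$,
\be
d(U,X)\le \rho+t+d(U'',X). \label{eq:plan-sandwich}
\ee

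Next I would derive the two bounds that drive the decoding argument. For an upper bound on $d(U,V)$: take $X=U'''$ in \eqref{eq:plan-sandwich} and use $d(U'',U''')\le\Delta$ to get $d(U,U''')\le\rho+t+\Delta$; then the $d_c$-triangle inequality $d_c(U,V)\le d_c(U,U''')+d_c(U''',V)$ together with $d(U''',V)\le r_d$ gives $d(U,V)\le(\sqrt{\rho+t+\Delta}+\sqrt{r_d})^2$. For a lower bound on $d(V,T)$, where $T\in\cC$ is any codeword with $T\ne U$: from $d(U,T)\ge\dmin(\cC)$ and \eqref{eq:plan-sandwich} with $X=T$ we get $d(U'',T)\ge\dmin(\cC)-\rho-t$; applying the $d_c$-triangle inequality along the chain $U''\to U'''\to V\to T$ together with $d(U'',U''')\le\Delta$ and $d(U''',V)\le r_d$ gives $d_c(V,T)\ge\sqrt{(\dmin(\cC)-\rho-t)/2}-\sqrt{\Delta/2}-\sqrt{r_d/2}$, and squaring (the hypothesis forces the right-hand side to be nonnegative) gives $d(V,T)\ge(\sqrt{\dmin(\cC)-\rho-t}-\sqrt{\Delta}-\sqrt{r_d})^2$.

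Finally I would close the argument: the minimum-distance decoder outputs $U$ provided $d(V,U)<d(V,T)$ for every codeword $T\ne U$, so by the two bounds above it suffices that $(\sqrt{\rho+t+\Delta}+\sqrt{r_d})^2<(\sqrt{\dmin(\cC)-\rho-t}-\sqrt{\Delta}-\sqrt{r_d})^2$. But the hypothesis \eqref{radius_with_noise} is precisely $(\sqrt{\rho+t+\Delta}+\sqrt{\Delta}+2\sqrt{r_d})^2<\dmin(\cC)-\rho-t$, which rearranges to $\sqrt{\dmin(\cC)-\rho-t}-\sqrt{\Delta}-\sqrt{r_d}>\sqrt{\rho+t+\Delta}+\sqrt{r_d}\ge 0$, and squaring both sides of this yields exactly the desired strict inequality. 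The bulk of the work is elementary manipulation of square roots; the only point requiring care — and the step I would double-check — is the bookkeeping of which triangle inequality is legitimate at each link of the chain (the sharp $d$-inequality of \Lref{sim_diag} only at the erasure, error, and $F$ links, where simultaneous diagonalizability holds, and the $d_c$-metric inequality across the rotation link), together with the nonnegativity checks needed before each squaring step.
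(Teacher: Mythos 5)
Your proof is correct and follows essentially the same route as the paper's: the same intermediate subspaces ($U''$, $U'''$ are the paper's $V_1$, $V_2$), the same use of \Lref{inequality for operator channel} (which you re-derive via \Lref{sim_diag}), the chordal-distance triangle inequality across the rotation and $F$ links, and the same upper bound on $d(U,V)$ and lower bound on $d(T,V)$. Your explicit rearrangement of \eqref{radius_with_noise} and the nonnegativity check before squaring simply spell out what the paper leaves as ``it can be observed.''
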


\begin{proof}
Let $V_1 = \cH_k(U)\oplus E$ and $V_2 = \cR_\Delta\bigl(V_1)$. Note that we have $V = V_2 \oplus F$. Then by applying \Lref{inequality for operator channel} to the analog operator channel with input $U$ and output $V_1$ we have
\be
\label{eq11}
d(U,V_2)\leq \rho+t+d(V_1,V_2) = \rho+t + \Delta.
\ee
Also, by using the triangle inequality for the chordal distance $d_c(.,.)$ and noting that $d(.,.) = 2d_c(.,.)^2$ we have
\be
\label{eq12}
d(U,V) \leq \bigl( \sqrt{d(U,V_2)} + \sqrt{d(V_2,V)}\bigr)^2.
\ee
By \Lref{direct sum related to distance} we have $d(V_2,V) = r_d$. This together with \eqref{eq11} and \eqref{eq12} yields
\be
\label{eq13}
d(U,V) \leq (\sqrt{\rho+t + \Delta} + \sqrt{r_d})^2.
\ee
Now consider a codeword $T \in \cC$ other than $U$. Again, by applying \Lref{inequality for operator channel} to the analog operator channel with the input $U$ and the output $V_1$ and by rearranging the terms we have
\be
\label{eq14}
d(T,V_1) \geq d(T,U) - \rho - t \geq \dmin(\cC) - \rho - t,
\ee
where the last inequality is by the definition of minimum distance. Also, by using the triangle inequality for the chordal distance we have
\be
\label{eq15}
d(T,V) \geq \bigl(\sqrt{d(T,V_1)}- \sqrt{d(V_1,V_2)} - \sqrt{d(V_2,V)}\bigr)^2.
\ee
Again by noting that $d(V_1,V_2) \leq \Delta$, $d(V_2,V) = r_d$, and by combining \eqref{eq15} with \eqref{eq14} we have
\be
\label{eq16}
d(T,V) \geq \bigl(\sqrt{\dmin(\cC) - \rho - t}- \sqrt{\Delta} - \sqrt{r_d}\bigr)^2.
\ee
It can be observed that the condition given in \eqref{radius_with_noise} implies that the right hand side of \eqref{eq16} is strictly greater than that of \eqref{eq13}. In other words, \eqref{radius_with_noise} guarantees that
$$
d(T,V) > d(U,V),
$$
for any codeword $T \in \cC$ other than $U$. Hence, the minimum distance decoder returns $U$ which completes the proof. 
\end{proof}
\noindent
{\bf Remark\,3.} Note that \Tref{minimum distance decoder in presence of the additive noise} reduces to \Tref{error correction capability of codes related to minimum distance} by setting $r_d = \Delta = 0$. In other words, \Tref{minimum distance decoder in presence of the additive noise} \textit{properly} extends the result of \Tref{error correction capability of codes related to minimum distance}, on relating the minimum distance of analog subspace codes to their error-and-erasure correction capability, to the noisy analog operator channel scenario. In practice, the implicit noise interference term $F$ and, consequently, the term $r_d$ in \eq{radius_with_noise} can be potentially removed by simply discarding a certain number of received blocks at the receiver. However, this requires knowing the rank of the received signal by the receiver which may not be readily available due to the assumptions on non-coherent communications. This can be further explored when considering a practical wireless networking scenario to see whether such information, i.e., the rank of the received signal, can be obtained or well-approximated, e.g., using principal component analysis (PCA) methods, by the receiver. Also, as shown in \Tref{perturbation bound for the general case}, the other term, besides $r_d$, resulting from the additive noise that affects the output subspace is $\Delta$. Note that for a fixed signal matrix $\bA$, as $\norm{\bN} \rightarrow 0$, we have $\epsilon \rightarrow 0$ as well as $\Delta \rightarrow 0$, where $\epsilon$ and $\Delta$ are specified in \eq{epsilon'deff} and \eq{Delta-def}, respectively. This together with a procedure to remove the $r_d$ term, as discussed above, show that the analog operator channel can be made \textit{robust} with respect to the additive noise, i.e., the subspace distance perturbation in the received signal matrix, caused by the additive noise, goes to zero as $\norm{\bN} \rightarrow 0$.

\noindent

\section{Constructions of Analog Subspace Codes}
\label{sec:five}

In this section, we explore three approaches for constructing analog subspace codes. In particular, the novel approach based on character sums results in explicit constructions with better rate-minimum distance trade-off compared to the previously known constructions for a wide range of parameters.

More specifically, we are concerned with the following equivalent questions. For a given dimension of the ambient space $n$ and the size of the subspace code $|\cC|$, or equivalently the rate of $\cC$, construct $\cC$ with the maximum possible $\dmin$. Alternatively, given $n$ and a desired $\dmin$, construct the subspace code $\cC$ with the maximum size/rate. Although the exact answer to these equivalent questions are not known in general, one can derive sphere-packing-type upper bounds and Gilbert-Varshamov-type lower bounds for codes in the subspace domain.

By precisely characterizing the volume of balls in the Grassmann space Barg and Nogin derived lower and upper bounds for $R$ as $n\xrightarrow{} \infty$ while $m$ and $\delta$ are fixed \cite{barg2002bounds}. More specifically, they analyzed the asymptotic behavior of the volume of a sphere with a fixed radius on $G_{m,n}(\Lf)$ that is then used in a packing-type and a covering-type argument. Their result is recapped in the following theorem. Note that we use the new notion of distance, as defined in Definition\,\ref{dist-def}, to state the results.

\begin{theorem}\cite[Theorem 2]{barg2002bounds}
\label{barg}
There exists a sequence of codes in $G_{m,n}(\Lf)$ with fixed $m$ and $\delta$, while $n \rightarrow \infty$, and asymptotic rate 
\be \label{barg lower bound}
R > -\frac{1}{2} \beta m \ln(\delta).
\ee
Moreover, for any such sequence of codes 
\be 
\label{barg upper bound}
R < -\beta m \ln(\sqrt{1-\sqrt{1-\frac{\delta}{2}}}),
\ee
where $\beta=1,2$ for $\Lf=\R,\C$, respectively, as discussed in Section\,\ref{notation}.
\end{theorem}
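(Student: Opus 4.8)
Since the statement above simply recasts \cite[Theorem 2]{barg2002bounds} in the normalization of \Dref{code parameters}, the plan is to reprove it along Barg and Nogin's lines: a Gilbert--Varshamov construction for \eqref{barg lower bound} and a sphere-packing argument for \eqref{barg upper bound}, followed by a change of variables translating their parameters into ours. I would work on the Grassmannian $G_{m,n}(\Lf)$ with its (unique up to scaling) invariant probability measure $\omega$, and for $0<\tau\le 1$ set $B(\tau)\deff\{V\in G_{m,n}(\Lf):d(V,V_0)\le 2m\tau\}$, a metric ball whose $\omega$-measure does not depend on the center $V_0$; since $d=2d_c^2$ this is the set $\{V:d_c(V,V_0)^2\le m\tau\}$, i.e. the normalized squared chordal radius of $B(\tau)$ equals $\tau$. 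For a constant-dimension code $\cC$ of dimension $m$ one has $\delta=\dmin(\cC)/(2m)$, so ``$\cC$ has normalized minimum distance $\delta$'' reads ``$\dmin(\cC)\ge 2m\delta$'', i.e.\ every pair of codewords is at normalized squared chordal distance at least $\delta$.

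The one substantial analytic input --- and the main obstacle --- is the asymptotic volume estimate: for fixed $m$ and $\tau\in(0,1)$,
\[
-\frac1n\ln\omega\bigl(B(\tau)\bigr)\;\longrightarrow\;-\tfrac12\,\beta m\ln\tau \qquad (n\to\infty).
\]
I would obtain this by writing $\omega(B(\tau))$ as the integral, over the region $\{\sum_{i=1}^m\sin^2\theta_i\le m\tau\}$, of the joint density of the principal angles $\theta_1,\dots,\theta_m$ between a fixed $m$-plane and a uniformly random one --- a Jacobi (MANOVA) ensemble whose Vandermonde factor is raised to the power $\beta$, which is precisely where the dependence on $\Lf$ enters --- and then extracting the exponential order in $n$. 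The shape is already visible at $m=1$, where over $\C$ one has $\omega(B(\tau))=\tau^{n-1}$ exactly. Pinning down this constant is the crux; everything else is bookkeeping.

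Granting the estimate, \eqref{barg lower bound} follows from the greedy/Gilbert--Varshamov construction: take $\cC$ maximal subject to $\dmin(\cC)\ge 2m\delta$; then the balls $B(\delta)$ about its codewords cover $G_{m,n}(\Lf)$ (an uncovered point could be adjoined without violating the distance constraint), so $|\cC|\,\omega(B(\delta))\ge 1$ and hence $R=\tfrac1n\ln|\cC|\ge -\tfrac1n\ln\omega(B(\delta))\to -\tfrac12\beta m\ln\delta$. For \eqref{barg upper bound}, the bare sphere-packing bound --- balls of chordal radius $\tfrac12\sqrt{m\delta}$ about the codewords are pairwise disjoint by the genuine triangle inequality for $d_c$ --- yields only $R\le -\tfrac12\beta m\ln(\delta/4)$, which is strictly weaker than the claimed bound. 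To recover the stated constant I would invoke the buffering refinement (in the spirit of the Elias--Bassalygo / Blichfeldt bounds): a code of minimum distance $2m\delta$ confined to a ball of normalized squared radius $\tau$ contains only subexponentially many codewords once $\tau$ lies below the ``Johnson radius'' $\rho_\ast=1-\sqrt{1-\delta/2}$, the relevant root of $\rho(2-\rho)=\delta/2$; averaging over the ball center and letting $\tau\uparrow\rho_\ast$ then gives $R\le -\tfrac1n\ln\omega(B(\rho_\ast))\to -\tfrac12\beta m\ln\bigl(1-\sqrt{1-\delta/2}\bigr)=-\beta m\ln\sqrt{1-\sqrt{1-\delta/2}}$. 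One last point of care: because $d(\cdot,\cdot)$ obeys only a $2$-relaxed triangle inequality, I would run all the packing steps with the genuine metric $d_c$ and convert back via $d=2d_c^2$ only at the end.
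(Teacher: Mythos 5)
The paper does not prove this statement at all: it is quoted verbatim (up to the change of normalization) from Barg and Nogin \cite[Theorem 2]{barg2002bounds}, so the only meaningful comparison is with their original argument, which your outline largely reconstructs. Your lower-bound half is fine: the greedy/GV covering step is correct, and you rightly identify the ball-volume asymptotics $-\frac1n\ln\omega(B(\tau))\to\frac12\beta m\ln(1/\tau)$ (for fixed $m$, $n\to\infty$) as the analytic crux, to be extracted from the joint density of the principal angles; this is exactly the estimate the paper itself imports elsewhere (it is the ingredient \eqref{ball volume} used in the proof of Theorem \ref{random_code_thm}), and your $m=1$ sanity check $\omega(B(\tau))=\tau^{n-1}$ is correct. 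Of course, neither the volume estimate nor anything downstream of it is actually proved in your sketch, so as it stands the heavy lifting is still outstanding.

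The genuine gap is in the upper-bound half. You assert, without derivation, that $\rho_\ast=1-\sqrt{1-\delta/2}$, ``the relevant root of $\rho(2-\rho)=\delta/2$,'' is a Johnson radius below which a ball contains subexponentially many codewords, attributing this to an Elias--Bassalygo/Rankin-type argument. Carried out honestly, that argument gives a different constant: under the projection embedding onto the sphere of radius $\approx\sqrt m$, a Grassmannian ball of normalized squared chordal radius $\rho$ sits inside a spherical cap contained in a Euclidean ball of squared radius $\approx m\rho(2-\rho)$, and Rankin's bound forbids exponentially many points already when the embedded minimum distance $\sqrt{2m\delta}$ exceeds $\sqrt2$ times that radius, i.e. for $\rho(2-\rho)<\delta$; so the threshold your invoked argument yields is $1-\sqrt{1-\delta}$, not the constant you quote, which reads as reverse-engineered from the target. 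Your stated claim is nevertheless true, and in a stronger form that closes the proof more simply: the same cap has Euclidean diameter squared $\approx 4m\rho(2-\rho)$, so when $\rho(2-\rho)<\delta/2$ a ball of normalized squared chordal radius $\rho$ contains \emph{at most one} codeword; averaging the codeword count over a uniformly random center gives $M\,\omega(B(\rho))\le 1$ and hence \eqref{barg upper bound}, with no Johnson-type counting at all (this cap-packing in the angular metric is essentially how the constant $1-\sqrt{1-\delta/2}$ arises in the original proof). So the proposal is completable, but as written the decisive quantitative step of the upper bound is asserted rather than derived, and both it and the volume asymptotics must be supplied before this constitutes a proof; the strict-versus-weak inequalities in \eqref{barg lower bound}--\eqref{barg upper bound} are then only minor bookkeeping.
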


Note that the lower bound in \Tref{barg} is based on existence-type arguments. However, a somewhat \textit{stronger} result can be established to conclude that such codes, perhaps by sacrificing in the rate, can be found with probability arbitrarily close to $1$ in a random ensemble. This is the focus of the next subsection. Also, it is discussed how such a result can be used in constructing explicit codes. 

\subsection{Constructions based on random ensembles}

For large values of $n$, the volume of a ball with a certain radius $r$ in $G_{m,n}(\C)$ is characterized in \cite{barg2002bounds}. This is done by utilizing the relation between the principal angles of two uniformly distributed subspaces in $G_{m,n}(\C)$ and the singular values of Wishart matrices, the matrices of the form $\bN \bN^{\text{H}}$, where the elements of $\bN \in \C^{m \times n}$ are i.i.d standard normal  random variables. Note that the subspace spanned by the rows of such a random matrix $\bN$ is uniformly distributed on the corresponding sphere in the Grassmann space $G_{m,n}(\C)$. 

Let $\cR$ denote a random ensemble of subspace codes with code size $M=\exp(nR)$, wherein each codeword is the row space of a randomly generated $m \times n$ matrix with i.i.d entries from the $\cN(0,1)$ distribution. In the next theorem, it is shown that a random subspace code \textit{almost} achieves half of the Gilbert-Varshamov lower bound, stated in \eqref{barg lower bound}, as $n \rightarrow \infty$, with a probability that approaches $1$ exponentially fast in $n$. 

\begin{theorem}
\label{random_code_thm}
Consider a random ensemble $\cR$ of subspace codes with the rate 
$$
R=-\frac{1}{4}\beta m\ln(\delta)-\epsilon,
$$
for some $\epsilon>0$. Then the normalized minimum distance of a code $\cC$ randomly picked from $\cR$ is at least $\delta$ with probability at least $1-\exp(-2n\epsilon+o(n))$. 
\end{theorem}
\begin{proof}
Let $\cC = \{\Phi_i: 1\leq i \leq M\}$, where $M=\exp(nR)$ and $\cC$ is randomly picked from $\cR$. Then the probability $p$ that two arbitrary codewords $\Phi_i$ and $\Phi_j$ have distance at most $r$ is equal to the volume of a ball with radius $r$, which can be characterized as follows \cite[Theorem 1]{barg2006bound}:
\be \label{ball volume}
p=(\frac{r}{2m})^{\frac{\beta nm}{2}+o(n)}.    
\ee
Let $X_{i,j}$ be an indicator random variable which is $1$ if $d(\Phi_i,\Phi_j)\leq r$; otherwise, $X_{i,j} = 0$. Let 
$$
X\,\deff\, \sum_{i=1}^{M}\sum_{j=i+1}^{M}X_{i,j}.
$$
Using Markov inequality together with \eqref{ball volume} and the linearity of expectation we have:
\begin{align}
  &\Pr[X\geq 1]<E[X]= {M\choose{2}} \Pr[X_{i,j}=1]\\
  &= {M\choose{2}} p<\exp\Bigl(2nR+\bigl(\frac{\beta mn}{2} +o(n)\bigr)\ln(\frac{r}{2m})\Bigr). \label{rate of the random code}
\end{align}
Note that if the random variable $X$, associated to $\cC$, is zero, then it implies that $\dmin(\cC) \geq r$, i.e.,
\be
\label{random_code_thm_eq1}
\Pr[\dmin(\cC) \geq r] = \Pr[X = 0] = 1 - \Pr[X\geq 1].
\ee
This together with \eqref{rate of the random code} complete the proof. 
\end{proof}
\noindent
{\bf Remark\,4.} Note that random ensembles, in general, do not lead to explicit constructions of subspace codes that can be constructed with complexity that is polynomial in $n$. However, a potential approach to utilize such ensembles is to pick another parameter $n'$ that is much smaller than $n$, e.g., $n'=O(\log n)$, and consider a random ensemble of subspace codes with the dimension of ambient space equal to $n'$. Then a brute-force search within the ensemble is feasible as its complexity is exponential in $n'$ and, hence, it is polynomial in $n$. Also, a minimum distance decoding is feasible for such a code. Then a \textit{proper} concatenation scheme can be potentially used, by concatenating the random inner code with some explicit construction of an outer code in order to construct explicit codes with non-vanishing rates given a fixed $\delta$ as $n \rightarrow \infty$. The details are left for future work.

 \subsection{Packing lines using binary codes}

A special and yet practically interesting case of the analog operator channel is when $m=1$. For instance, from the non-coherent wireless communications perspective, elaborated in Section\,\ref{sec:analog_operator}, this can be a reasonable scenario in the uplink transmission of a wireless node with one antenna transmitting the node's data in one time frame. In such cases, it is natural to assume that there is no rank deficiency; otherwise, reliable communication is not possible. A possible approach to construct codes in $G_{1,n}(\Lf)$ is to use well-known constructions of binary linear codes and map the binary coded data into real/complex symbols resembling a joint coded modulation design. 

The idea of constructing codes in $G_{1,n}(\R)$ using binary linear codes  was first suggested in \cite{conway1996packing}. Consider a binary linear code of length $n$ that forms a closed set under the completion, i.e., it contains the all-one codeword, with normalized minimum Hamming distance $\gamma$. Then a possible mapping to real-valued symbols is to map zeros to $1$'s and ones to $-1$'s. This results in a code in $G_{1,n}(\R)$ with the normalized minimum distance $\delta = 32\gamma^2(1-\gamma)^2$. The same result also holds in the complex domain, i.e., packing lines in $G_{1,n}(\C)$, where one can map ones to $(1+i)$'s and zeros to $-(1+i)$'s. Hence, given a family of  binary linear codes with fixed Hamming distance and asymptotically non-vanishing rate in terms of $n$, one can construct a code in $G_{1,n}(\Lf)$ with a fixed minimum distance and non-vanishing rate as well. To this end, in the rest of this subsection, we briefly overview various families of binary linear codes, from this perspective, in the coding theory literature.

There are several well-known constructions of binary linear codes, mainly based on code concatenations, with asymptotically good minimum distance. The idea of code concatenation was first introduced by Elias in the form of product codes \cite{elias} and was later developed by Forney \cite{forney1965concatenated}. Also, the well-known Zyablov lower bound for the normalized minimum distance of a concatenated code with rate $R$ is characterized as follows \cite{zyablov1971estimate}:
\be
\label{zyablov}
\delta_Z(R)=\max_{R\leq x\leq 1} \delta_{GV}(x)(1-\frac{R}{x}),
\ee
where $\delta_{GV}(x)$ is the Gilbert-Varshamov normalized distance at rate $x$ for the binary codes. Furthermore, by using multilevel concatenation, codes with minimum distance even larger than Zyablov bound \eqref{zyablov} can be constructed. More specifically, a generalization to \eqref{zyablov} by letting the number of concatenation levels going to infinity was given later, known as Blokh-Zyablov bound\cite{blokh1982linear}, stated as follows:
\be
\label{Blokh-Zyablov}
R_{BZ}=1-h(\delta)-\delta\int_0^{1-h(\delta)}\frac{dx}{\delta_{GV}(x)}.
\ee

Another line of work on combining codes is due to Tanner \cite{tanner1981recursive} with the general theme of using one or more shorter codes, referred to as subcodes, in combination with a certain bipartite graph. In particular, a certain family of such graph-based codes is referred to as expander codes, which are proved to have asymptotically good minimum distance. For instance, Barg and Zemor \cite{barg2006distance} proposed a family of expander codes meeting the Zyablov bound, specified in \eqref{zyablov}, with the construction complexity at most $O(n^2)$ and a decoder, with complexity that is linear in $n$, that corrects up to half of the minimum distance. They further improved this result by introducing another family of codes that exceeds Zyablov bound with construction complexity not more than $O(n \log n)$.
 
There are also other families of concatenated codes, based on algebraic-geometry (AG) codes as their outer codes, which can somewhat provide better minimum distance comparing to graph-based codes. More specifically, a certain concatenated code family with a short binary inner code and an AG outer code can surpass the Blokh-Zyablov bound \cite{katsman1984modular}, characterized in \eqref{Blokh-Zyablov}. These codes have construction complexity of $O(n^3\log^3 n)$ \cite{shum2001low} and are currently known to have the largest asymptotic rate, given a certain fixed normalized minimum, while having polynomial construction complexity. Also, the decoding complexity of such codes is $O(n^3)$.

A main drawback of aforementioned construction methods based on concatenation is that they often require $n$ to be very large in order to meet the promised performance. In other words, they exhibit excellent asymptotic performance, however, they often fall short for finite values of $n$ that are of practical interest. Hence, it is desirable to focus on explicit constructions of subspace codes that can be constructed for a wide range of $n$, regardless of how small or large $n$ is, while providing reasonable performance. In a sense, we aim at constructing subspace codes that resemble well-known families of block codes such as Reed-Solomon codes in the subspace domain, and that can be constructed for a broad range of parameters. This is the focus of the next subsection. 


\subsection{A new family of analog subspace codes: Character-polynomial codes}
\label{sec:cp}

In this section, we propose a new family of subspace codes in $G_{1,n}(\C)$ by leveraging polynomial evaluations over finite fields and mapping the finite field symbols to the roots of unity. Then results on character sums from analytic number theory \cite{lidl1997finite}, discussed next, are used to bound the minimum distance of the constructed codes. The resulting codes are referred to as character-polynomial (CP) codes. 

Consider a cyclic group $G$ of order $|G|$. A character $\chi$ associated to $G$ is a homomorphism from $G$ to the unit circle in the complex plane with the regular multiplication of complex numbers, i.e.,
\be
\label{character preserves the product}
\chi(g_1g_2) = \chi(g_1)\chi(g_2),
\ee
for all $g_1,g_2 \in G$. It can be observed that
\be
\label{inverse conjugate}
\chi^*_{g}=\chi(g^{-1}),
\ee
where $g^{-1}$ is the inverse of $g$ in $G$ and $x^*$ is the complex conjugate of $x$ for $x \in \C$.  Given a certain and finite number of characters $\chi_1,\hdots,\chi_l$ one can define the product character $\chi_1\chi_2\hdots \chi_l$ by setting 
$$
(\chi_1\chi_2\hdots \chi_l)(g)=\chi_1(g)\chi_2(g)\hdots \chi_l(g),
$$ 
for all $g \in G$. The set of all  characters associated to $G$ together with this product form an Abelian group of order $|G|$, where the elements $\chi_j$, for $j=1,2,\hdots,|G|$, are described as follows \cite{lidl1997finite}:
\be
\label{character deff for groups}
\chi_j(g'^k)=\text{e}(\frac{jk}{|G|}),
\ee
for $k=0,1,\hdots,|G|-1$, where $g'$ is a generator of $G$ and $\text{e}(x)\,\deff\, \exp(2\pi i x)$. Note that $\chi_0(g)=1$ for $g\in G$ and hence, it is referred to as the \emph{trivial} character. 

A finite field $\Fq$ is naturally equipped with two finite Abelian groups, i.e., the additive and the multiplicative group. Then, the \emph{additive} and \emph{multiplicative} characters of $\Fq$ are defined as the characters associated with the additive and the multiplicative group in $\Fq$, respectively. Using \eqref{character deff for groups}, the additive characters, denoted by $\chi_j$, for $j=0,1,\hdots,q-1$, are described as follows\cite{lidl1997finite}:
\be
\label{character sum deff}
 \chi_j(\alpha )=\text{e}(\frac{\text{tr}_{\text{a}}(j\alpha )}{p})
\ee
for $j \in \Fq$, where $p$ is the characteristic of $\Fq$, and 
$$
\text{tr}_{\text{a}}(\gamma)\,\deff\, \gamma+\gamma^p+\cdots+\gamma^{p^{m-1}}
$$ 
is the \emph{absolute} trace function from $\Fq$ to $\Fp$, where $q=p^m$. Note that \eqref{character sum deff} implies that $\chi_j(\alpha )=\chi_1(j\alpha )$ and the trivial additive character is $\chi_0(\alpha )=1$ for $\alpha \in \Fq$. 

The following result, due to Weil \cite{weil1948some}, on the summations over characters, which are commonly  known  as exponential sums or character sums in the literature, will be utilized in bounding the minimum distance of CP codes, to be discussed next. 
\begin{theorem}\cite[Theorem 5.35]{lidl1997finite}
\label{Weils theorem}
Consider a polynomial $f \in \Fq[x]$ of degree $d\geq 1$ with $\gcd(d,q)=1$. Let $\chi$ be a nontrivial additive character of $\Fq$. Then
\be
\label{weil}
|\sum_{\alpha \in \Fq} \chi(f(\alpha ))|\leq (d-1)\sqrt{q}.
\ee
\end{theorem}

Next, for some $k<q$, let 
\be
\label{Fdef}
\cF \,\deff\, \{f \in \Fq[x]: f(x) = \sum_{i\in [k], i\hspace{-2mm}\mod p \neq 0} f_i x^i \}.
\ee
Note that $|\cF|=q^{ \left \lceil{k(p-1)/p}\right \rceil }$.  We fix $n=q-1$ in our construction.
\begin{definition}
\label{CP_def}
The code $\cC(\cF) \subseteq G_{1,n}(\C)$, referred to as a character-polynomial (CP) code, is defined as follows:
\begin{align}
 \cC(\cF)\,\deff\,\{ \Span{(c_1,c_2,\hdots,c_{n})}: c_i=&\chi\bigl({f(\alpha_i)\bigr)}, \nonumber\\
&\forall f\in \cF, \alpha_i \in \Fq \backslash \{0\} \},\label{cpcode_def}
\vspace{-1mm}
\end{align}
where $\chi$ is a fixed nontrivial additive character of $\Fq$, and $\alpha_i$'s are distinct non-zero elements of $\Fq$. 
\end{definition}

The following theorem provides a lower bound on the normalized minimum distance of $\cC(\cF)$ in terms of $q$ and $d$.
\begin{theorem}
\label{cpcode}
The code $\cC(\cF)$ has size $q^{ \left \lceil{k(p-1)/p}\right \rceil }$ and 
\be
\label{rate_tradeoff}
\delta \geq 1-\frac{\bigl((k-1)\sqrt{q}+1\bigr)^2}{n^2} ,
\ee
where $\delta = d_{\min}/2m$ (here $m=1$) is the normalized minimum distance of the code.
\end{theorem}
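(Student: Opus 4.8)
The plan is to reduce the statement to a direct application of Weil's bound (Theorem~\ref{Weils theorem}) through the expression for the distance between two lines given by Lemma~\ref{another expression for chordal distance}. First I would observe that the size claim is immediate: the map $f \mapsto \bigl(\chi(f(\alpha_1)),\hdots,\chi(f(\alpha_n))\bigr)$ is injective on $\cF$ because a nonzero polynomial of degree less than $q$ cannot vanish at all $n=q-1$ nonzero elements of $\Fq$, and distinct values of $f$ on the $\alpha_i$'s produce distinct (unnormalized) vectors since $\chi$ applied to the pointwise difference is not identically $1$; hence $|\cC(\cF)| = |\cF| = q^{\lceil k(p-1)/p\rceil}$.

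For the distance bound, take two distinct codewords corresponding to $f,g \in \cF$, with unit-norm row vectors $\bZ = \tfrac{1}{\sqrt n}(\chi(f(\alpha_i)))_i$ and $\bT = \tfrac{1}{\sqrt n}(\chi(g(\alpha_i)))_i$. By Lemma~\ref{another expression for chordal distance} with $m=1$ we have $d(\Span{\bZ},\Span{\bT}) = 2\bigl(1 - |\bZ\bT^{\text{H}}|^2\bigr)$, so $\delta = d/2 = 1 - |\bZ\bT^{\text{H}}|^2$, and it suffices to upper bound $|\bZ\bT^{\text{H}}|$. Now
\[
\bZ\bT^{\text{H}} = \frac{1}{n}\sum_{i=1}^{n} \chi(f(\alpha_i))\,\overline{\chi(g(\alpha_i))} = \frac{1}{n}\sum_{i=1}^{n} \chi\bigl(f(\alpha_i) - g(\alpha_i)\bigr),
\]
using that $\chi$ is a homomorphism from the additive group of $\Fq$ to the unit circle together with \eqref{inverse conjugate} (conjugation corresponds to additive inversion). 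Setting $h = f - g$, a nonzero polynomial, and completing the sum to run over all of $\Fq$ (the extra term $\alpha = 0$ contributes $\chi(h(0)) = \chi(0) = 1$, since every $f \in \cF$ has zero constant term), we get
\[
\Bigl|\sum_{i=1}^{n} \chi(h(\alpha_i))\Bigr| \;\le\; \Bigl|\sum_{\alpha \in \Fq} \chi(h(\alpha))\Bigr| + 1.
\]
The key point is that $h$ has degree $d_h$ with $1 \le d_h \le k-1$, and — crucially — $\gcd(d_h, q) = 1$: by the definition of $\cF$ in \eqref{Fdef}, every exponent appearing in $f$ or $g$ is $\not\equiv 0 \pmod p$, so the leading exponent of $h$ is not divisible by $p$, hence not by $q$. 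This is exactly the hypothesis needed to invoke Theorem~\ref{Weils theorem}, which yields $|\sum_{\alpha\in\Fq}\chi(h(\alpha))| \le (d_h - 1)\sqrt q \le (k-2)\sqrt q$. Therefore $|\,n\,\bZ\bT^{\text{H}}| \le (k-2)\sqrt q + 1 \le (k-1)\sqrt q + 1$, giving $|\bZ\bT^{\text{H}}| \le \bigl((k-1)\sqrt q + 1\bigr)/n$ and hence $\delta \ge 1 - \bigl((k-1)\sqrt q + 1\bigr)^2/n^2$, as claimed.

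The main obstacle — and the reason the set $\cF$ is defined the way it is — is ensuring the $\gcd(d_h,q)=1$ condition required by Weil's theorem holds for \emph{every} difference $h = f-g$ of codewords, not just for the defining polynomials individually; restricting the exponents to those coprime-to-$p$ residues is precisely what makes this automatic, since the property "leading exponent not divisible by $p$" is preserved under subtraction (the top surviving term is one whose exponent was already forbidden from being a multiple of $p$). A secondary bookkeeping point is handling the passage between the sum over the $n = q-1$ nonzero field elements and the full field sum (the single extra unit-modulus term, which I would bound crudely by $1$), and making sure the constant-term condition in \eqref{Fdef} is what kills the $\alpha=0$ contribution cleanly; both are routine once the structure is set up. I would also note in passing that $\delta$ as defined is automatically in $[0,1]$, consistent with the bound since $(k-1)\sqrt q + 1 \le n = q-1$ in the regime $k < q$ of interest.
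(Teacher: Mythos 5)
Your argument is essentially the paper's own proof: both reduce $\delta$ via Lemma~\ref{another expression for chordal distance} (with $m=1$) to bounding $|\bC_1\bC_2^{\text{H}}|$, rewrite the inner product as a character sum of the difference polynomial $h=f-g$ using \eqref{character preserves the product} and \eqref{inverse conjugate}, complete the sum over $\Fq$ at the cost of the single term $\chi(h(0))=1$ (zero constant term), and invoke Theorem~\ref{Weils theorem}, with the coprime-to-$p$ exponent restriction in \eqref{Fdef} guaranteeing $\gcd(\deg h,q)=1$ for every difference. The only slip is the claim $\deg h\le k-1$: since $i$ ranges over $[k]$, the degree can equal $k$, so Weil gives $(\deg h-1)\sqrt{q}\le(k-1)\sqrt{q}$ directly—which is exactly the bound the theorem needs, so your final inequality (and the size count $|\cF|=q^{\lceil k(p-1)/p\rceil}$) stands unchanged.
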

\begin{proof} 
Consider distinct codewords $\Span{\bC_1},\Span{\bC_2} \in \cC(\cF)$ with corresponding distinct polynomials $f_1 ,f_2 \in \cF_1$. Let $f = f_2 - f_1$. Then we have 
\begin{align*}
n\norm{\bC_1\bC_2^{\text{H}}}&=|\sum _{\alpha \in\Fq \setminus \{0\}}\chi^*\bigl(f_1(\alpha )\bigr)\chi\bigl(f_2(\alpha )\bigr)|\\
&\stackrel{(a)}{=}|\sum _{\alpha \in\Fq}\chi\bigl(f(\alpha )\bigr)-1|\\
&\stackrel{(b)}{\leq}|\sum _{\alpha \in\Fq}\chi\bigl(f(\alpha )\bigr)|+1\\
&\stackrel{(c)}{\leq}(k-1)\sqrt{q}+1,
\end{align*}
where $(a)$ follows by \eqref{inverse conjugate} and \eqref{character preserves the product} and noting that $\chi\bigl(f(0)\bigr)=1$, $(b)$ is by the triangle inequality, and $(c)$  is by \Tref{Weils theorem} applied to $f = f_1-f_2$. Note that $f \in \cF$. This implies that $\deg(f) \geq 1$ and  $\gcd(\deg(f),q)=1$ since polynomials in $\cF$, as defined in \eq{Fdef}, do not have a monomial of degree divisible by $p$. Hence, $f = f_1-f_2$ satisfies the conditions in \Tref{Weils theorem}.


Using $\Lref{another expression for chordal distance}$, we have
\begin{align*}
\delta&=\frac{d(\Span{\bC_1},\Span{\bC_2})}{2}=1-\norm{\bC_1\bC_2^{\text{H}}}^2>1-\frac{((k-1)\sqrt{q}+1)^2}{n^2},\\
\end{align*}
which completes the proof.
\end{proof}

\begin{figure}[t]
\vspace{-.2mm}
\begin{center}
\includegraphics[width=.9\linewidth]{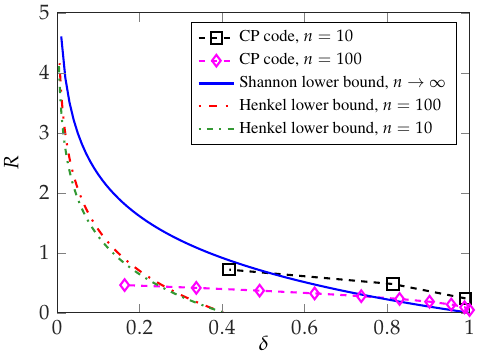}
\caption{\small Comparison of CP codes with lower-bounds in terms of the trade-off between the rate $R$ and the normalized minimum distance $\delta$ for $m=1$.}
\end{center}
\end{figure}

Note that the right hand side of \eq{rate_tradeoff} can be approximated in terms of the rate of the code. This results in a bound on the trade-off between the normalized minimum distance and the rate of the code. We plot this bound and compare it with other bounds/constructions next. In particular, in order to have a simplified numerical analysis, we limit our attention to the case where $q$ is a prime, i.e., $q=p$, and $k < p$. In this case, we have $|\cC(\cF)| = q^k$ and $R = \frac{k \ln q}{n}$. Also, the bound in \eq{rate_tradeoff} is simplified as follows:
\be
\label{rate_tradeoff2}
\delta \geq 1-\frac{((k-1)\sqrt{q}+1)^2}{n^2} > 1 - \frac{qR^2}{(\ln q)^2}.
\ee

It is worth mentioning that all nontrivial choices for the character $\chi$ result in the same codebook. To observe this, let $\chi_a$ and $\chi_b$ denote two distinct nontrivial characters for some $a,b \in \Fq$, and $\cC_a(\cF)$ and $\cC_b(\cF)$ denote the corresponding codebooks  described in Definition\,\ref{CP_def}, respectively. Since both $a$ and $b$ are non-zero, $c= ba^{-1}$ is a well-defined non-zero element of $\Fq$. Then, for any $f(x) \in \cF$, one can write
\begin{align}
\chi_b(f(x))=&\text{e}(\frac{\text{tr}_{\text{a}}(bf(x) )}{p})=\text{e}(\frac{\text{tr}_{\text{a}}(ac f(x) )}{p}) \nonumber \\
=& \text{e}(\frac{\text{tr}_{\text{a}}(a f'(x) )}{p})=\chi_a(f'(x)). \label{equvalence}
\end{align}

Note that $f'(x)\deff cf(x)$ is also in $\cF$. This together with \eqref{equvalence} imply that $\cC_a(\cF)=\cC_b(\cF)$. 

In Figure\,2, we compare the trade-off between the rate $R$ and the normalized minimum distance $\delta$ that the CP codes offer at different values of $n$ with Shannon's lower bound \cite{shannon1959probability}, for $n \rightarrow \infty$, and lower bounds derived by Henkel \cite[Theorem 4.2]{henkel2005sphere} for finite values of $n$. Note that these lower bounds are of the same type as Gilbert-Varshamov bound and do not yield explicit constructions. Nevertheless, it can be observed that CP codes outperform these lower bounds at low rates, thereby improving these bounds while providing explicit constructions. Note also that the trade-off between $R$ and $\delta$ shown in Figure\,2 for CP codes is derived from the bound established in \Tref{cpcode}. In other words, the actual value of $\delta$ for the given values of $R$ can be larger than what is shown in Figure\,2.

\noindent
{\bf Remark\,5.} Given a subspace code in $\cC \subseteq G_{m,n}(\C)$ one can construct a code in $G_{2m,2n}(\R)$ by mapping  $\bC_i \in \cC$ to   
\be\label{map}
\left[
\begin{array}{c c}
\Re(\bC_i) & \Im(\bC_i) \\
-\Im(\bC_i) & \Re(\bC_i)
\end{array}
\right],
\ee
where $\Re$ and $\Im$ represent the real part and the imaginary part of their input, respectively. It can be observed that this mapping preserves the normalized distance between the codewords. Hence, the normalized minimum distance of $\cC$ is also preserved. This mapping enables us to construct codes in $G_{2,n}(\R)$ using the proposed CP codes, while keeping the normalized minimum distance and the size of the code the same, in order to have fair comparisons with existing code constructions in the real Grassmann space. 

\begin{figure}[t]
\vspace{1.5mm}
\begin{center}
\includegraphics[width=.9 \linewidth]{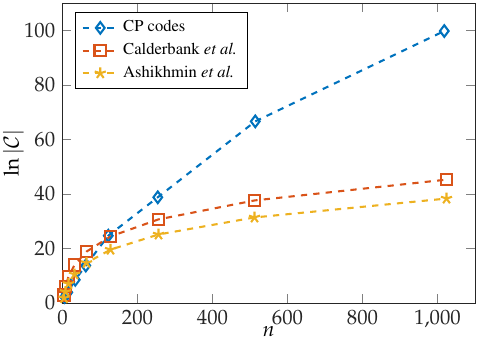}
\caption{\small Comparison of the codes  in $G_{2,n}(\R)$ obtained from our proposed CP  codes in $G_{1,\frac{n}{2}}(\C)$ with the codes constructed by Calderbank \textit{et al.} \cite{calderbank1999group} and Ashikhmin \textit{et al.} \cite{ashikhmin2010grassmannian}. The codes from \cite{calderbank1999group} \cite{ashikhmin2010grassmannian} have $\delta = \frac{1}{2}$, and for CP codes we have $\delta \geq \frac{1}{2}$, for all considered values of $n$.}\label{one-dim-comparison}
\end{center}
\end{figure}

In Figure\,3, we compare CP codes with two existing constructions of Grassmann codes, that are constructed explicitly for a wide range of $n$, in the literature. In \cite{calderbank1999group},  Calderbank \emph{et al.} introduce a group-theoretic framework for packing in $G_{2^i,2^k}(\R)$ for any pair of integers $(i,k)$ with $i\leq k$. In another prior work, Ashikhmin \emph{et al.} \cite{ashikhmin2010grassmannian} provide a code construction method in $G_{2^i,2^k}(\C)$ based on binary Reed-Muller (RM) codes.  It is worth noting that the subspaces in this construction correspond to certain projection operators determined by Pauli matrices appearing in quantum computing, where the main objective is to enable error correction in quantum computing. Therefore, one should not expect such constructions to maximize the rate. However, they are competitive in small dimensions as illustrated in Figure\,\ref{one-dim-comparison}.    
By utilizing the mapping specified in \eqref{map} for both the CP codes and the codes constructed in \cite{ashikhmin2010grassmannian} with $i=0$, we compare the log-size of the codes obtained in $G_{2,n}(\R)$ with that of codes in $G_{2,n}(\R)$ from  \cite{calderbank1999group}, while fixing $\delta = \frac{1}{2}$ for the codes from \cite{calderbank1999group} \cite{ashikhmin2010grassmannian} and having $\delta \geq \frac{1}{2}$ for the CP codes, for all the considered values of $n$. The results are shown in Figure \,3. Note that $n$ is equal to $2^k$ for the constructions in \cite{calderbank1999group} and \cite{ashikhmin2010grassmannian}, while for CP codes we pick $n = 2p$, where $p$ is the largest $p$ with $p < 2^{k-1}$, for $k\in \{3,\hdots,10\}$. It can be observed that CP codes offer significantly larger code size and, consequently, rate comparing to the other explicit constructions, as $n$ grows large.         

\noindent
{\bf Remark\,6.} Note that the lower bound of \Tref{cpcode} on $\dmin$ of CP codes is at most two. Even with $d_{\min}$ slightly greater than two \Tref{error correction capability of codes related to minimum distance} implies that the correction of only one error can be guaranteed using a minimum distance decoder. A possible solution to this issue is to consider list decoders in order to guarantee error recovery beyond $d_{\min}/2$. In the finite field domain, several prior works have studied list decoding for algebraic subspace codes, see, e.g., \cite{MV1,GNW, MV2, GWX, MV3}. In particular, it is observed in \cite{MV2} that an unbounded number of errors can be corrected by increasing the list size, while the dimension of subspace codewords is one. However, obtaining similar results in the analog domain necessitates further investigation and is left for future work.

\noindent
{\bf Remark\,7.} The Weil bound, recapped in \Tref{Weils theorem}, has been utilized in various coding theoretic contexts , e.g., to provide bounds on the minimum distance of the duals of BCH codes \cite{10.2307/2099415} and to estimate the covering radius of BCH codes with large block lengths \cite{helleseth1985covering,tietainen1987covering}. Also, it has inspired the design of certain families of sequences over finite fields of prime size with low auto-correlation and cross-correlation in \cite{blake1982note} in a similar fashion. Notably, Kumar \emph{et al.} derive a counterpart of the Weil bound over Galios rings \cite{kumar1995upper}. They further design families of phase-shift-keying sequences with low correlation where this bound is then utilized to provide a guarantee on the correlation level of the construction. The main difference between the design criterion considered in the constructions in \cite{blake1982note} and \cite{kumar1995upper} and that of our approach is that these prior works require the correlation between the circular shifts of any two codewords to be bounded, in addition to the auto-correlation, while we only require the correlation between two distinct codewords to be small. Furthermore, as it is shown in the next section, a certain property of the construction provided in this work is leveraged to construct subspace codes in higher dimensions ($m>1$).

\subsection{Higher Dimensional Character-Polynomial Codes}

The \emph{character-polynomial} (CP) codes, demonstrated in Section\,\ref{sec:cp}, provide a family of one-dimensional subspaces in a complex Grassmann space, i.e., a packing of lines in $G_{1,n}(\C)$. Next, we provide a slightly different version of one-dimensional CP codes that enables us to provide a new construction in $G_{m,n}(\C)$ for $m>1$.  We fix $n=q$ in this section.
 \begin{definition}\label{CP_def2}
The code $\cC'(\cF) \subseteq G_{1,n}(\C)$ is defined as follows:
\be\label{cpcode_def2}
\cC'(\cF)\,\deff\,\{\Span{ (c_1,c_2,\hdots,c_{n})}: c_i=\chi\bigl({f(\alpha_i)\bigr)},\forall f\in \cF, \alpha_i \in \Fq\},
\vspace{-1mm}
\ee
where $\chi$ is a fixed nontrivial additive character of $\Fq$, $\cF$ is defined in \eq{Fdef}, and $\alpha_i$'s are distinct elements of $\Fq$.
\end{definition}

The  definition of CP codes provided in \eqref{cpcode_def} excludes the zero element of $\Fq$ from the set of evaluation points. Including the zero element in the alternative version, specified in \eqref{cpcode_def2}, leads to a certain property that is discussed in the following lemma. Before that, we define the following. For any two sets of orthonormal bases $B_1$ and $B_2$ for $W$, the \emph{cross-correlation} between $B_1$ and $B_2$ is defined as 
\be{}
\Delta_{B_1,B_2}\deff \max_{\bv_1\in B_1, \bv_2 \in B_2} |v_1\cdot v_2|,
\ee
where the operation $\cdot$ denotes the inner product.

\begin{lemma}\label{mutual-cor}
The set of normal vectors representing revised one-dimensional CP codewords, defined in Definition\,\ref{CP_def2}, can be split into $q^{ \left \lceil{k(p-1)/p}\right \rceil -1}$, denoted by $b$, collections $B_i$'s, for $i\in[b]$, where each $B_i$ is an orthonormal basis for $W$. Furthermore, the cross-correlation between $B_i$'s is upper bounded as follows: 
\be\label{mutual-cor-bound}
\max_{i,j \in [b], i\neq j} \Delta_{B_i,B_j}\leq\frac{(k-1)}{\sqrt{n}}.
\ee
\end{lemma}
\begin{proof}
The set of polynomials $\cF$, defined in \eqref{Fdef}, can be split into disjoint subsets such that the polynomials belonging to the same subset differ only in the coefficient of the degree-one monomial, i.e., the coefficient of $x$. Note that the constant coefficient of all the polynomials in $\cF$ is equal to zero  according to \eqref{Fdef}. Then, two distinct polynomials $f$ and $f'$ in $\cF$ belong to the same subset if and only if $\deg(f-f')=1$. Consequently, it can be observed that $\cF$ is partitioned into $q^{ \left \lceil{k(p-1)/p}\right \rceil -1}$ of such subsets each of size $q$. Let $\bc=\frac{1}{\sqrt{n}}(c_1, \cdots, c_n)$ and $\bc'=\frac{1}{\sqrt{n}}(c_1', \cdots, c_n')$, where $c_i=\chi\bigl(f(\alpha_i)\bigr)$ and $c'_i=\chi\bigl(f'(\alpha_i)\bigr)$ for $i \in [n]$, and $f$ and $f'$ belong to the subset of $\cF$ as discussed above. Then,
\be\label{aa}\small
|\bc \cdot \bc'|= \frac{1}{n}|\sum _{\alpha \in\Fq} \chi^*\bigl(f(\alpha )\bigr)\chi\bigl(f'(\alpha )\bigr)| \stackrel{(a)}{=} \frac{1}{n}|\sum _{\alpha \in\Fq}\chi\bigl((f'-f)(\alpha )\bigr)| \stackrel{(b)}{\leq} 0
\ee
where $(a)$ follows by \eqref{character sum deff} and $(b)$ is by the Weil bound, specified in \eqref{weil}, together with noting that $\deg(f-f')=1$. Therefore, \eqref{aa} implies that we must have $\bc \cdot \bc' = 0$, i.e., $\bc$ and $\bc'$ must be orthogonal. Hence, each of the subsets of $\cF$, as discussed above, consists of $q$ mutually orthogonal lines in $W$. Consequently, the set of unit-norm vectors representing these lines is an orthonormal basis for $W$. The upper bound in \eqref{mutual-cor-bound} can be derived again using the Weil bound and by noting that $\deg(f-f')\leq k$ for any $f$ and $f'$ in $\cF$, i.e., for any two normalized distinct $\bc$ and $\bc'$ in the CP code, the upper bound in \eqref{mutual-cor-bound} holds. 
\end{proof}

Inspired by \Lref{mutual-cor}, we provide a construction for packing $m$-planes in $G_{m,n}(\C)$ for $m>1$. Let $\bv^{(i)}_{1}, \cdots, \bv^{(i)}_{q}$ denote the orthonormal basis vectors in $B_i$, for $i \in [b]$, where $b$ is defined in \Lref{mutual-cor}. Also, let
\be\label{phi-ij}
\Phi_{ij}=\begin{bmatrix}
\bv^{(i)}_{(j-1)m+1}\\
\vdots\\
\bv^{(i)}_{jm}
\end{bmatrix},
\ee
for all $i \in [b]$ and $j \in [\floor{\frac{q}{m}}]$.
Then,
\be\label{high-dim-code}
\cC\,\deff\,\{\Span{\Phi_{ij}}:\forall i \in [b], \forall j \in [\floor{\frac{q}{m}}] \}
\ee
is a subspace code in $G_{m,n}(\C)$. Note that we have
$$
|\cC|=q^{ \left \lceil{k(p-1)/p}\right \rceil -1}\floor{\frac{q}{m}}.
$$

The normalized minimum distance of $\cC$ is characterized in the next theorem. 
\begin{theorem}
The normalized minimum distance $\delta$ of the code $\cC$, defined in \eq{high-dim-code}, is lower bounded as
\be\label{dmin}
\delta\geq {1-\frac{m(k-1)^2}{n}}.
\ee
\end{theorem}
\begin{proof}
Consider two distinct codewords $C_1=\Span{\Phi_{i,j}}$ and $C_2=\Span{\Phi_{i',j'}} \in \cC$. Note that the rows in $\Phi_{i,j}$ and $\Phi_{i',j'}$ are orthonormal. Note also that for $i=i'$, $\Phi_{i,j}\Phi_{i',j'}^{\text{H}}=\boldsymbol{0}$, since all the rows of both matrices belong to the same orthonormal basis which implies that $\delta=1$ in this case. Otherwise, i.e., when $i\neq i'$, we have 

\begin{equation}\label{phi_bound}
    \norm{\Phi_{i,j}\Phi_{i',j'}^{\text{H}}}^2\leq \frac{m^2 (k-1)^2}{n},
\end{equation}
since $\Phi_{i,j}\Phi_{i',j'}^{\text{H}}$ is an $m \times m$ matrix whose elements' absolute value is upperbounded by  \eqref{mutual-cor-bound}. Then, one can write 


\begin{align}
    \delta &= \frac{d(C_1,C_2)}{2m} \label{normalaized_dist_deff_1}\\
    &=\frac{2(m-\norm{\Phi_{i,j}\Phi_{i',j'}^{\text{H}}}^2)}{2m}\label{alter_characterization_2}\\
    & \geq 1-\frac{m(k-1)^2}{n}\label{phi_char_3},
\end{align}
where \eqref{normalaized_dist_deff_1} is by the definition of the normalized distance provided in Definition\,\ref{code parameters}, \eqref{alter_characterization_2} follows by utilizing the  alternative characterization provided in \Lref{another expression for chordal distance} for  the distance function defined in \eqref{d function}, and \eqref{phi_char_3} results from  plugging in \eqref{phi_bound} into \eqref{alter_characterization_2}.
\end{proof}

In Table\,\ref{calderbank-table}, we compare the parameters of our proposed codes with those of the closest explicit construction of Grassmann codes proposed in \cite{calderbank1999group}. 
By utilizing the mapping specified in \eqref{map}  for
our codes in $G_{\frac{m}{2},\frac{n}{2}}(\C)$,
we compare the blocklength, logarithm of the code size, and the minimum distance  of the codes obtained in $G_{m,n}(\R)$ with those of the codes in $G_{m,n}(\R)$ from  \cite{calderbank1999group}. In all the instances of $n$ and $m=4,8,16,32$ in Table\,\ref{calderbank-table}, the normalized minimum distance is equal to $\frac{1}{{2}}$ for codes from \cite{calderbank1999group} 
while it is at least $\frac{1}{{2}}$ for our codes. Note that $n$ is equal to $2^k$ for the construction in \cite{calderbank1999group}
, while for our codes we pick $n = 2p$, where $p$ is the largest prime number with $p < 2^{k-1}$, for various choices of $k$. 
It can be observed that our proposed  construction offer significantly larger code size and, consequently, rate comparing to the explicit construction of \cite{calderbank1999group}, as $n$ grows large. Note also that even for small $n$ in a few rows of Table\,\ref{calderbank-table} where the log-size of the proposed CP codes is not larger than that of the codes constructed in \cite{calderbank1999group} yet, the minimum distance offered by CP codes is still larger while having a smaller blocklength. The other advantage of our construction is that it does not have any constraint on the codeword dimension $m$ but $m$ has to be a power of two in \cite{calderbank1999group}.

\begin{table}[]
	\centering
	\begin{tabular}{ |c|ccc|ccc|} 
	        \hline
	        & \multicolumn{3}{|c|}{Our construction } 
	         &\multicolumn{3}{|c|}{Calderbank \emph{et al.} \cite{calderbank1999group}}\\
		 \hline
		 $m$ & $n$ & $\ln(|\cC|)$& $d_{\min}$ & $n$ & $\ln(|\cC|)$ & $d_{\min}$ \\
		\hline
		 $4$ & $254$ & $28.36$ & $2.43$& $256$ &$32.62$ & $2$\\ 
		
		$4$& $502$ & $43.51$& $2.44$ & $512$ &$40.25$ & $2$\\ 
		
		$4$ & $1018$ & $74.09$ & $2.10$& $1024$ & $48.57$ & $2$\\ 
		
		$4$& $2042$ & $110.2$ &$2.37$ & $2048$ & $57.59 $ & $2$\\

		
		$8$ & $1018$ & $48.47$ & $4.92$& $1024$ & $49.87$ & $4$\\ 
		
		$8$ & $2024$ & $81.76$ & $4.3$ & $2048$ & $59.57$ &$4$\\
		
		$8$ & $4078$ & $120.5$ & $4.47$& $4096$ & $69.97$ &$4$\\ 
		
		$8$ & $8186$ & $189.9$& $4.22$ & $8192$ & $81.07$ & $4$\\

		
		$16$ & $2042$ & $53.34$ & $9.86$& $2048$ & $59.52$ & $8$\\ 
		
		$16$ & $4078$ & $89.36$ & $8.40$ & $4096$ & $70.61$ & $8$\\
		
		
		$32$ & $4078$ & $58.19$ & $19.67$& $4096$ & $69.19$ & $16$ \\ 
		
		$32$ & $8186$ & $97.03$ &$16.86$ & $8192$ & $81.67$ & $16$\\
	    \hline	
	
	\end{tabular}
	\caption{\small  Parameters of the new construction in Grassmannian space provided in this paper to those of the proposed scheme by Calderbank \emph{et al.} \cite[Theorem\,1]{calderbank1999group}. }\label{calderbank-table}
	\vspace{-5mm}
\end{table}

\section{Conclusion and Future Work}
\label{sec:six}

In this paper, motivated by the emergence of massive wireless networks, we provided a new coding framework for non-coherent communications across such networks in order to mitigate the network deficiency and interference, e.g., from neighbouring cells in a cellular network. To this end, the concept of analog operator channel was introduced that captures the effect of network deficiency and interference as subspace erasure and errors, respectively. Also, a new distance is defined and relations between the error-and-erasure correction capability of a subspace code in the analog domain and its minimum distance is established. This leads to a code design criteria to correct errors/erasures over the analog operator channel. Furthermore, we extended the framework to the case with additive noise, that naturally exists in physical layer links, and showed that the analog operator channel is robust with respect to the additive noise. As a consequence, the effect of noise is shown to be negligible in high signal-to-noise ratio regimes. Finally, we proposed a novel algebraic construction for  subspace codes in the complex domain that outperforms the existing constructions in the literature, in terms of the rate-minimum distance trade-off, for a wide range of code blocklength. 

There are several directions for future research. 
Extending the results derived for minimum distance decoder by exploring list decoding algorithms is a natural direction for future research.
In the case of noisy operator channel, obtaining tighter bounds on the subspace perturbation imposed by the additive noise is another research direction.         

\bibliographystyle{IEEEtran}
\bibliography{ref}

\begin{thebibliography}{10}
\providecommand{\url}[1]{#1}
\csname url@samestyle\endcsname
\providecommand{\newblock}{\relax}
\providecommand{\bibinfo}[2]{#2}
\providecommand{\BIBentrySTDinterwordspacing}{\spaceskip=0pt\relax}
\providecommand{\BIBentryALTinterwordstretchfactor}{4}
\providecommand{\BIBentryALTinterwordspacing}{\spaceskip=\fontdimen2\font plus
\BIBentryALTinterwordstretchfactor\fontdimen3\font minus
  \fontdimen4\font\relax}
\providecommand{\BIBforeignlanguage}[2]{{%
\expandafter\ifx\csname l@#1\endcsname\relax
\typeout{** WARNING: IEEEtran.bst: No hyphenation pattern has been}%
\typeout{** loaded for the language `#1'. Using the pattern for}%
\typeout{** the default language instead.}%
\else
\language=\csname l@#1\endcsname
\fi
#2}}
\providecommand{\BIBdecl}{\relax}
\BIBdecl

\bibitem{3gpp-5g-2}
``{5G}; {Study} on new radio {(NR)} access technology physical layer aspects,''
  3GPP TR 38.802, March 2017.

\bibitem{nakamura2013trends}
T.~Nakamura, S.~Nagata, A.~Benjebbour, Y.~Kishiyama, T.~Hai, S.~Xiaodong,
  Y.~Ning, and L.~Nan, ``Trends in small cell enhancements in {LTE} advanced,''
  \emph{IEEE Communications Magazine}, vol.~51, no.~2, pp. 98--105, 2013.

\bibitem{sun2013interference}
S.~Sun, Q.~Gao, Y.~Peng, Y.~Wang, and L.~Song, ``Interference management
  through {CoMP} in {3GPP} {LTE}-advanced networks,'' \emph{IEEE Wireless
  Communications}, vol.~20, no.~1, pp. 59--66, 2013.

\bibitem{deb2014algorithms}
S.~Deb, P.~Monogioudis, J.~Miernik, and J.~P. Seymour, ``Algorithms for
  enhanced inter-cell interference coordination {(eICIC)} in {LTE} {HetNets},''
  \emph{IEEE/ACM Transactions on Networking (ToN)}, vol.~22, no.~1, pp.
  137--150, 2014.

\bibitem{KK}
R.~Koetter and F.~R. Kschischang, ``Coding for errors and erasures in random
  network coding,'' \emph{IEEE Transactions on Information Theory}, vol.~54,
  no.~8, pp. 3579--3591, 2008.

\bibitem{ho2003benefits}
T.~Ho, R.~Koetter, M.~Medard, D.~R. Karger, and M.~Effros, ``The benefits of
  coding over routing in a randomized setting,'' \emph{Proceedings of IEEE
  International Symposium on Information Theory}, 2003.

\bibitem{shannon1959probability}
C.~E. Shannon, ``Probability of error for optimal codes in a {Gaussian}
  channel,'' \emph{Bell System Technical Journal}, vol.~38, no.~3, pp.
  611--656, 1959.

\bibitem{barg2002bounds}
A.~Barg and D.~Y. Nogin, ``Bounds on packings of spheres in the {Grassmann}
  manifold,'' \emph{IEEE Transactions on Information Theory}, vol.~48, no.~9,
  pp. 2450--2454, 2002.

\bibitem{bachoc2006linear}
C.~Bachoc, ``Linear programming bounds for codes in {Grassmannian} spaces,''
  \emph{IEEE Transactions on Information Theory}, vol.~52, no.~5, pp.
  2111--2125, 2006.

\bibitem{bachoc2006bounds}
C.~Bachoc, Y.~Ben-Haim, and S.~Litsyn, ``Bounds for codes in the {Grassmann}
  manifold,'' in \emph{2006 IEEE 24th Convention of Electrical \& Electronics
  Engineers in Israel}.\hskip 1em plus 0.5em minus 0.4em\relax IEEE, 2006, pp.
  25--29.

\bibitem{barg2006bound}
A.~Barg and D.~Nogin, ``A bound on {Grassmannian} codes,'' \emph{Journal of
  Combinatorial Theory, Series A}, vol. 113, no.~8, pp. 1629--1635, 2006.

\bibitem{marzetta1999capacity}
T.~L. Marzetta and B.~M. Hochwald, ``Capacity of a mobile multiple-antenna
  communication link in {Rayleigh} flat fading,'' \emph{IEEE Transactions on
  Information Theory}, vol.~45, no.~1, pp. 139--157, 1999.

\bibitem{hochwald2000unitary}
B.~M. Hochwald and T.~L. Marzetta, ``Unitary space-time modulation for
  multiple-antenna communications in {Rayleigh} flat fading,'' \emph{IEEE
  Transactions on Information Theory}, vol.~46, no.~2, pp. 543--564, 2000.

\bibitem{zheng2002communication}
L.~Zheng and D.~N.~C. Tse, ``Communication on the {Grassmann} manifold: A
  geometric approach to the noncoherent multiple-antenna channel,'' \emph{IEEE
  Transactions on Information Theory}, vol.~48, no.~2, pp. 359--383, 2002.

\bibitem{cox2012introduction}
C.~Cox, \emph{An introduction to {LTE}: {LTE}, {LTE}-advanced, {SAE} and {4G}
  mobile communications}.\hskip 1em plus 0.5em minus 0.4em\relax John Wiley \&
  Sons, 2012.

\bibitem{henkel2005sphere}
O.~Henkel, ``Sphere-packing bounds in the {Grassmann} and {Stiefel}
  manifolds,'' \emph{IEEE Transactions on Information Theory}, vol.~51, no.~10,
  pp. 3445--3456, 2005.

\bibitem{conway1996packing}
J.~H. Conway, R.~H. Hardin, and N.~J. Sloane, ``Packing lines, planes, etc.:
  Packings in {Grassmannian} spaces,'' \emph{Experimental mathematics}, vol.~5,
  no.~2, pp. 139--159, 1996.

\bibitem{dhillon2008constructing}
I.~S. Dhillon, J.~R. Heath, T.~Strohmer, and J.~A. Tropp, ``Constructing
  packings in {Grassmannian} manifolds via alternating projection,''
  \emph{Experimental mathematics}, vol.~17, no.~1, pp. 9--35, 2008.

\bibitem{nebe2001invariants}
G.~Nebe, E.~M. Rains, and N.~J. Sloane, ``The invariants of the {Clifford}
  groups,'' \emph{Designs, Codes and Cryptography}, vol.~24, no.~1, pp.
  99--122, 2001.

\bibitem{shor1998family}
P.~Shor and N.~J.~A. Sloane, ``A family of optimal packings in {Grassmannian}
  manifolds,'' \emph{Journal of Algebraic Combinatorics}, vol.~7, no.~2, pp.
  157--163, 1998.

\bibitem{calderbank1999group}
A.~Calderbank, R.~Hardin, E.~Rains, P.~Shor, and N.~J.~A. Sloane, ``A
  group-theoretic framework for the construction of packings in {Grassmannian}
  spaces,'' \emph{Journal of Algebraic Combinatorics}, vol.~9, no.~2, pp.
  129--140, 1999.

\bibitem{calderbank1997quantum}
A.~R. Calderbank, E.~M. Rains, P.~W. Shor, and N.~J. Sloane, ``Quantum error
  correction and orthogonal geometry,'' \emph{Physical Review Letters},
  vol.~78, no.~3, p. 405, 1997.

\bibitem{calderbank1998quantum}
A.~R. Calderbank, E.~M. Rains, P.~Shor, and N.~J. Sloane, ``Quantum error
  correction via codes over {GF}(4),'' \emph{IEEE Transactions on Information
  Theory}, vol.~44, no.~4, pp. 1369--1387, 1998.

\bibitem{sloane2002packing}
N.~J.~A. Sloane, ``Packing planes in four dimensions and other mysteries,''
  \emph{arXiv preprint math/0208017}, 2002.

\bibitem{itw2019}
O.~Tirkkonen1 and R.~Calderbank, ``Codebooks of complex lines based on binary
  subspace chirps,'' in \emph{Proceedings of Information Theory Workshop
  (ITW)}, 2019.

\bibitem{strohmer2003grassmannian}
T.~Strohmer and R.~W. Heath~Jr, ``Grassmannian frames with applications to
  coding and communication,'' \emph{Applied and computational harmonic
  analysis}, vol.~14, no.~3, pp. 257--275, 2003.

\bibitem{xia2005achieving}
P.~Xia, S.~Zhou, and G.~B. Giannakis, ``Achieving the {Welch} bound with
  difference sets,'' \emph{IEEE Transactions on Information Theory}, vol.~51,
  no.~5, pp. 1900--1907, 2005.

\bibitem{tropp2005designing}
J.~A. Tropp, I.~S. Dhillon, R.~W. Heath, and T.~Strohmer, ``Designing
  structured tight frames via an alternating projection method,'' \emph{IEEE
  Transactions on information theory}, vol.~51, no.~1, pp. 188--209, 2005.

\bibitem{kutyniok2009robust}
G.~Kutyniok, A.~Pezeshki, R.~Calderbank, and T.~Liu, ``Robust dimension
  reduction, fusion frames, and {Grassmannian} packings,'' \emph{Applied and
  Computational Harmonic Analysis}, vol.~26, no.~1, pp. 64--76, 2009.

\bibitem{ding2007generic}
C.~Ding and T.~Feng, ``A generic construction of complex codebooks meeting the
  {Welch} bound,'' \emph{IEEE Transactions on information theory}, vol.~53,
  no.~11, pp. 4245--4250, 2007.

\bibitem{han2006geometrical}
G.~Han and J.~Rosenthal, ``Geometrical and numerical design of structured
  unitary space--time constellations,'' \emph{IEEE transactions on information
  theory}, vol.~52, no.~8, pp. 3722--3735, 2006.

\bibitem{hochwald2000systematic}
B.~M. Hochwald, T.~L. Marzetta, T.~J. Richardson, W.~Sweldens, and R.~Urbanke,
  ``Systematic design of unitary space-time constellations,'' \emph{IEEE
  Transactions on Information Theory}, vol.~46, no.~6, pp. 1962--1973, 2000.

\bibitem{agrawal2001multiple}
D.~Agrawal, T.~J. Richardson, and R.~L. Urbanke, ``Multiple-antenna signal
  constellations for fading channels,'' \emph{IEEE Transactions on Information
  Theory}, vol.~47, no.~6, pp. 2618--2626, 2001.

\bibitem{aggarwal2006grassmannian}
V.~Aggarwal, A.~Ashikhmin, and A.~R. Calderbank, ``A {Grassmannian} packing
  based on the {Nordstrom-Robinson} code,'' in \emph{Proceedings of IEEE
  Information Theory Workshop}.\hskip 1em plus 0.5em minus 0.4em\relax IEEE,
  2006, pp. 1--5.

\bibitem{ashikhmin2010grassmannian}
A.~Ashikhmin and A.~R. Calderbank, ``Grassmannian packings from operator
  {Reed--Muller} codes,'' \emph{IEEE Transactions on Information Theory},
  vol.~56, no.~11, pp. 5689--5714, 2010.

\bibitem{reboredo2014compressive}
H.~Reboredo, F.~Renna, R.~Calderbank, and M.~R. Rodrigues, ``Compressive
  classification of a mixture of gaussians: Analysis, designs and geometrical
  interpretation,'' \emph{arXiv preprint arXiv:1401.6962}, 2014.

\bibitem{nokleby2015discrimination}
M.~Nokleby, M.~Rodrigues, and R.~Calderbank, ``Discrimination on the
  {Grassmann} manifold: Fundamental limits of subspace classifiers,''
  \emph{IEEE Transactions on Information Theory}, vol.~61, no.~4, pp.
  2133--2147, 2015.

\bibitem{nokleby2013information}
M.~Nokleby, R.~Calderbank, and M.~R. Rodrigues, ``Information-theoretic limits
  on the classification of {Gaussian} mixtures: Classification on the grassmann
  manifold,'' in \emph{2013 IEEE Information Theory Workshop (ITW)}.\hskip 1em
  plus 0.5em minus 0.4em\relax IEEE, 2013, pp. 1--5.

\bibitem{huang2015role}
J.~Huang, Q.~Qiu, and R.~Calderbank, ``The role of principal angles in subspace
  classification,'' \emph{IEEE Transactions on Signal Processing}, vol.~64,
  no.~8, pp. 1933--1945, 2015.

\bibitem{roy1947note}
S.~Roy, ``A note on critical angles between two flats in hyperspace with
  certain statistical applications,'' \emph{Sankhy{\=a}: The Indian Journal of
  Statistics}, pp. 177--194, 1947.

\bibitem{ye2014distance}
K.~Ye and L.-H. Lim, ``Distance between subspaces of different dimensions,''
  \emph{arXiv preprint arXiv:1407.0900}, vol.~4, 2014.

\bibitem{sun1991perturbation}
J.-G. Sun, ``Perturbation bounds for the {Cholesky} and {QR} factorizations,''
  \emph{BIT Numerical Mathematics}, vol.~31, no.~2, pp. 341--352, 1991.

\bibitem{golub1996matrix}
H.~Golub and C.~F. Van~Loan, ``Matrix computations,'' \emph{Press, London},
  1996.

\bibitem{hwang2004cauchy}
S.-G. Hwang, ``Cauchy's interlace theorem for eigenvalues of {Hermitian}
  matrices,'' \emph{The American Mathematical Monthly}, vol. 111, no.~2, pp.
  157--159, 2004.

\bibitem{elias}
P.~Elias, ``Error-free coding,'' \emph{IRE Trans. on Inform. Theory,}, pp.
  29--37, 1954.

\bibitem{forney1965concatenated}
G.~D. Forney, ``Concatenated codes.'' 1965.

\bibitem{zyablov1971estimate}
V.~V. Zyablov, ``An estimate of the complexity of constructing binary linear
  cascade codes,'' \emph{Problemy Peredachi Informatsii}, vol.~7, no.~1, pp.
  5--13, 1971.

\bibitem{blokh1982linear}
E.~Blokh and V.~V. Zyablov, ``Linear concatenated codes,'' \emph{Moscow, USSR:
  Nauka}, 1982.

\bibitem{tanner1981recursive}
R.~Tanner, ``A recursive approach to low complexity codes,'' \emph{IEEE
  Transactions on information theory}, vol.~27, no.~5, pp. 533--547, 1981.

\bibitem{barg2006distance}
A.~Barg and G.~Zemor, ``Distance properties of expander codes,'' \emph{IEEE
  Transactions on Information Theory}, vol.~52, no.~1, pp. 78--90, 2006.

\bibitem{katsman1984modular}
G.~Katsman, M.~Tsfasman, and S.~Vladut, ``Modular curves and codes with a
  polynomial construction,'' \emph{IEEE Transactions on Information Theory},
  vol.~30, no.~2, pp. 353--355, 1984.

\bibitem{shum2001low}
K.~W. Shum, I.~Aleshnikov, P.~V. Kumar, H.~Stichtenoth, and V.~Deolalikar, ``A
  low-complexity algorithm for the construction of algebraic-geometric codes
  better than the {Gilbert-Varshamov} bound,'' \emph{IEEE Transactions on
  Information Theory}, vol.~47, no.~6, pp. 2225--2241, 2001.

\bibitem{lidl1997finite}
R.~Lidl and H.~Niederreiter, \emph{Finite fields}.\hskip 1em plus 0.5em minus
  0.4em\relax Cambridge university press, 1997, vol.~20.

\bibitem{weil1948some}
A.~Weil, ``On some exponential sums,'' \emph{Proceedings of the National
  Academy of Sciences of the United States of America}, vol.~34, no.~5, p. 204,
  1948.

\bibitem{MV1}
H.~Mahdavifar and A.~Vardy, ``Algebraic list-decoding on the operator
  channel,'' \emph{Proceedings of IEEE International Symposium on Information
  Theory}, pp. 1193--1197, 2010.

\bibitem{GNW}
V.~Guruswami, S.~Narayanan, and C.~Wang, ``List decoding subspace codes from
  insertions and deletions,'' \emph{Proceedings of the 3rd Innovations in
  Theoretical Computer Science Conference}, pp. 183--189, 2012.

\bibitem{MV2}
H.~Mahdavifar and A.~Vardy, ``Algebraic list-decoding of subspace codes,''
  \emph{IEEE Transactions on Information Theory}, vol.~59, no.~12, pp.
  7814--7828, 2013.

\bibitem{GWX}
V.~Guruswami, C.~Wang, and C.~Xing, ``Explicit list-decodable rank-metric and
  subspace codes via subspace designs,'' \emph{IEEE Transactions on Information
  Theory}, vol.~62, no.~5, pp. 2707--2718, May 2016.

\bibitem{MV3}
H.~Mahdavifar and A.~Vardy, ``Algebraic list-decoding in projective space:
  Decoding with multiplicities and rank-metric codes,'' \emph{IEEE Transactions
  on Information Theory}, vol.~65, no.~2, pp. 1085--1100, 2019.

\bibitem{10.2307/2099415}
\BIBentryALTinterwordspacing
D.~R. Anderson, ``A new class of cyclic codes,'' \emph{SIAM Journal on Applied
  Mathematics}, vol.~16, no.~1, pp. 181--197, 1968. [Online]. Available:
  \url{http://www.jstor.org/stable/2099415}
\BIBentrySTDinterwordspacing

\bibitem{helleseth1985covering}
T.~Helleseth, ``On the covering radius of cyclic linear codes and arithmetic
  codes,'' \emph{Discrete Applied Mathematics}, vol.~11, no.~2, pp. 157--173,
  1985.

\bibitem{tietainen1987covering}
A.~Tiet{\"a}inen, ``On the covering radius of long binary {BCH} codes,''
  \emph{Discrete Applied Mathematics}, vol.~16, no.~1, pp. 75--77, 1987.

\bibitem{blake1982note}
I.~Blake and J.~Mark, ``A note on complex sequences with low correlations
  (corresp.),'' \emph{IEEE Transactions on Information Theory}, vol.~28, no.~5,
  pp. 814--816, 1982.

\bibitem{kumar1995upper}
P.~V. Kumar, T.~Helleseth, and A.~R. Calderbank, ``An upper bound for {Weil}
  exponential sums over {Galois} rings and applications,'' \emph{IEEE
  transactions on Information Theory}, vol.~41, no.~2, pp. 456--468, 1995.

\end{thebibliography}

\appendix
\label{appendix}

\begin{lemma}
\label{dist-quasidist-relation}
    Given a metric $d_0:M \times M \xrightarrow{} \R$ on a set $M$, the function $d(x,y)=d_0(x,y)^2:M \times M \xrightarrow{} \R$ is a $2$-quasimetric on $M$.
\end{lemma}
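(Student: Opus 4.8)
The plan is to check the axioms of a $\sigma$-quasimetric directly from those of the metric $d_0$, taking $\sigma = 2$. The non-substantive properties come for free: $d(x,y) = d_0(x,y)^2 \ge 0$ since it is a square of a real number; $d(x,y) = 0 \iff d_0(x,y) = 0 \iff x = y$; and symmetry of $d$ is inherited immediately from symmetry of $d_0$. So the entire content of the lemma is the $2$-relaxed triangle inequality.

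For that, I would start from the ordinary triangle inequality for $d_0$, namely $d_0(x,z) \le d_0(x,y) + d_0(y,z)$, and square both sides (valid since both sides are non-negative):
\[
d_0(x,z)^2 \le d_0(x,y)^2 + 2\, d_0(x,y)\, d_0(y,z) + d_0(y,z)^2 .
\]
The cross term is then controlled by the elementary inequality $2ab \le a^2 + b^2$ applied with $a = d_0(x,y)$ and $b = d_0(y,z)$, which gives
\[
d_0(x,z)^2 \le 2\, d_0(x,y)^2 + 2\, d_0(y,z)^2 = 2\bigl(d(x,y) + d(y,z)\bigr),
\]
i.e.\ $d(x,z) \le 2\bigl(d(x,y)+d(y,z)\bigr)$, as required with $\sigma = 2$.

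I do not expect a genuine obstacle; the argument is one line of algebra plus AM--GM. The only subtlety worth flagging is the strict-versus-nonstrict form of \eqref{quasi-triangle-ineq}: the bound above is non-strict and in fact tight (take $M=\R$ with the absolute-value metric and the collinear points $x=0$, $y=1$, $z=2$, where $d(x,z)=4=2(d(x,y)+d(y,z))$), so the constant $2$ cannot be lowered, and the statement should be read with ``$\le$'' (equivalently, any $\sigma>2$ yields the strict inequality except in the degenerate case $x=y=z$). This also confirms that $d(\cdot,\cdot)$, being the square of the chordal metric $d_c(\cdot,\cdot)$ on $\cP(W)$, is a $2$-quasimetric, as used in the main text.
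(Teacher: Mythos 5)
Your argument is correct and is essentially the paper's own proof: both square the triangle inequality for $d_0$ and bound the cross term via $2ab\le a^2+b^2$ (the paper attributes this step to Cauchy--Schwarz, you to AM--GM, which is the same elementary inequality). Your remark that the bound is tight and that the relaxed triangle inequality in \eqref{quasi-triangle-ineq} should be read with ``$\leq$'' is a fair observation about the definition, not a gap in the proof.
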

\begin{proof}
 We only need to show that \eqref{quasi-triangle-ineq} holds for $d(.,.)$ with $\sigma=2$. The proof is by noting that for any $x,y,z \in M$ we have
 \begin{align}
     d(x,z)=d_0(x,z)^2 &\leq (d_0(x,y)+d_0(y,z))^2 \label{tri-ineq}\\
     &\leq 2(d_0(x,y)^2+d_0(y,z)^2)\label{non-neg}\\
     &=2(d(x,y)+d(y,z)),
 \end{align}
 where \eqref{tri-ineq} follows from the triangle inequality and \eqref{non-neg} is by Cauchy-Schwarz inequality. The remaining properties of a quasimetric follow in a straightforward way given that $d_0$ is a metric. 
\end{proof}

\begin{lemma}
\label{sim_diag}
    Suppose that the projection matrices of two subspaces $U,V \in \cP(W)$ are simultaneously diagonalizable. Then the distance $d(.,.)$, as defined in \eqref{dist-def}, satisfies the triangle inequality, i.e., 
    $$
    d(U,V)+d(V,T)\geq d(T,U),
    $$
    for any $T \in \cP(W)$.
\end{lemma}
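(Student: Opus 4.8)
The plan is to pass to a basis in which $\bP_U$ and $\bP_V$ are simultaneously diagonal, and then establish the triangle inequality \emph{coordinatewise}. By hypothesis there is an orthonormal basis of $W$ in which $\bP_U = \mathrm{diag}(a_1,\dots,a_n)$ and $\bP_V = \mathrm{diag}(b_1,\dots,b_n)$; since $\bP_U$ and $\bP_V$ are projection matrices, $a_i,b_i \in \{0,1\}$. By \Lref{rotation invariant} the distance $d(.,.)$ is independent of the choice of orthonormal basis, so we may work entirely in this basis. Write $\bP_T = [\,t_{ij}\,]$ for the projection matrix of $T$ in the same basis, let $c_i \deff t_{ii}$ be its diagonal entries, and let $S \deff \sum_{i\ne j}|t_{ij}|^2$ be the off-diagonal energy of $\bP_T$. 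Because $\bP_T = \bP_T^{\text{H}}\bP_T$, each diagonal entry satisfies $c_i = \sum_j |t_{ij}|^2 \ge |t_{ii}|^2 = c_i^2$, hence $0 \le c_i \le 1$.

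Next I would expand the three distances using \eqref{d function}, i.e.\ $d(X,Y) = \norm{\bP_X - \bP_Y}^2$ as a Frobenius norm. Since $\bP_U$ and $\bP_V$ are diagonal while $\bP_T$ carries all the off-diagonal mass, this gives
$$
d(U,T) = \sum_{i=1}^n (a_i - c_i)^2 + S,\qquad
d(V,T) = \sum_{i=1}^n (b_i - c_i)^2 + S,\qquad
d(U,V) = \sum_{i=1}^n (a_i - b_i)^2 .
$$
The key point is that the off-diagonal term $S$ is the \emph{same} in $d(U,T)$ and in $d(V,T)$, so it cancels in the inequality to be proved. Thus $d(U,V) + d(V,T) \ge d(U,T)$ reduces to
$$
\sum_{i=1}^n\Bigl[(a_i-b_i)^2 + (b_i-c_i)^2 - (a_i-c_i)^2\Bigr] \ge 0 ,
$$
which I would establish term by term.

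Finally, for each $i$ expand the bracket as $(a_i-b_i)^2 + (b_i-c_i)^2 - (a_i-c_i)^2 = 2(b_i-a_i)(b_i-c_i)$. A short case check on $a_i,b_i\in\{0,1\}$ finishes it: if $a_i=b_i$ the term is $0$; if $b_i=1$ it equals $2(1-a_i)(1-c_i)\ge 0$ since $c_i\le 1$; and if $b_i=0$ (so $a_i=1$) it equals $2c_i\ge 0$. Summing over $i$ gives the claim. The main thing to watch is that the naive route — using that the Frobenius norm $\norm{\cdot}$ is a genuine norm and then squaring the triangle inequality $\norm{\bP_U-\bP_T}\le\norm{\bP_U-\bP_V}+\norm{\bP_V-\bP_T}$ — is too weak, as it leaves a cross term $2\sqrt{d(U,V)\,d(V,T)}$; getting the sharp (metric, $\sigma=1$) inequality genuinely requires exploiting simultaneous diagonalizability to make the off-diagonal energy cancel and to reduce to the coordinatewise bound above. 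Properties of $d(.,.)$ other than the triangle inequality are immediate, so this completes the argument.
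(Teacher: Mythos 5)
Your proof is correct and follows essentially the same route as the paper: work in the basis where $\bP_U,\bP_V$ are diagonal with $0$--$1$ entries, expand the Frobenius norms so the off-diagonal energy of $\bP_T$ cancels between $d(U,T)$ and $d(V,T)$, and finish with a coordinatewise bound using $0\le t_{ii}\le 1$. The paper packages the last step as $|1-2t_{ii}|\le 1$ summed over $I_U\cap I_V^c$ and $I_U^c\cap I_V$ rather than your identity $2(b_i-a_i)(b_i-c_i)$ with a case check, but this is only a cosmetic difference.
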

\begin{proof}
     By definition of simultaneously diagonalizable matrices, there exists an orthonormal basis for the ambient space $W$ in which both $\bP_U$ and $\bP_V$ are diagonal. Let $u_{i,j}$, $v_{i,j}$ and $t_{i,j}$, for $i,j \in [n]$, denote the entries of the projection matrices $\bP_U$, $\bP_V$, and $\bP_T$ in this basis, respectively. Note that $u_{i,j}=v_{i,j}=0$ for $i\neq j$. Also, the diagonal entries of $\bP_U$ and $\bP_V$ are either $0$ or $1$, since $\bP_U$ and $\bP_V$ are projection matrices. Let 
     $$
     I_U\deff\{i: u_{i,i}=1, i\in [n]\},
     $$ 
     and 
     $$
     I_V\deff\{i: v_{i,i}=1, i\in [n]\}. 
     $$
     Also, we have $0\leq t_{i,i}\leq 1$, for  $i \in [n]$, a property that holds for diagonal entries of any projection matrix $\bP_T$. By \eqref{d function} we have the following relations for the pairwise distances between $U$, $V$, and $T$:
\begin{align}
        \label{duv}
        &d(U,V)=|I_U \cap I_V^c|+|I_U^c \cap I_V|,\\
        \label{dtu}
        &d(T,U)=\sum_{i,j\in [n]} |t_{i,j}|^2+\sum_{i\in I_U}(1-2t_{i,i}),\\
        \label{dvt}
       & d(V,T)=\sum_{i,j \in [n]} |t_{i,j}|^2+\sum_{i\in I_V}(1-2t_{i,i}),
\end{align}
where the complements in \eqref{duv} are taken with respect to $[n]$. Subtracting \eqref{dvt} from \eqref{dtu} yields
\begin{align}
d(T,U)-d(V,T)&=\hspace{-2mm}\sum_{i\in I_U\cap I_V^c}\hspace{-2mm}(1-2t_{i,i})-\hspace{-2mm}\sum_{i\in I_U^c\cap I_V}\hspace{-2mm}(1-2t_{i,i})\\
\label{ineq}
&\leq |I_U \cap I_V^c|+|I_U^c \cap I_V|,
\end{align}
where \eqref{ineq} is by noting that $-1\leq 1-2t_{i,i}\leq1$ for $i\in [n]$. This together with \eqref{duv} complete the proof.
\end{proof}

\begin{lemma}
\label{orthogonal subspaces distance mirroring}
For any $U,V\in \cP(W)$ we have
\be
\label{orthogonal distance}
d(U^\perp,V^\perp)=d(U,V),
\ee 
where $d(.,.)$ is defined in Definition\,\ref{dist-def}.
\end{lemma}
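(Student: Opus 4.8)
### Proof Proposal for Lemma \ref{orthogonal subspaces distance mirroring}

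\textbf{Approach.} The plan is to reduce the statement to a single algebraic identity between projection matrices, namely $\bP_{U^\perp} = \bI_n - \bP_U$, and then to exploit the translation-invariance of the Frobenius norm under subtracting a common matrix. The distance $d(.,.)$ defined in Definition \ref{dist-def} is $\norm{\bP_U - \bP_V}^2$, so everything hinges on how orthogonal complements act on projection matrices. By Lemma \ref{rotation invariant} the quantity $d(.,.)$ is basis-independent, so I may freely work in whatever orthonormal basis is most convenient.

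\textbf{Key steps.} First I would establish that for any subspace $U \in \cP(W)$, the orthogonal projection onto $U^\perp$ is $\bP_{U^\perp} = \bI_n - \bP_U$. This is the standard complementary-projection fact: if $\bZ$ has orthonormal rows spanning $U$, extend to an orthonormal basis of $W$ so that the remaining rows span $U^\perp$; then in that basis $\bP_U$ is the diagonal matrix with $1$'s in the coordinates of $U$ and $0$'s elsewhere, while $\bP_{U^\perp}$ is the diagonal matrix with the complementary pattern, which is exactly $\bI_n - \bP_U$. Second, I would write
\be
\bP_{U^\perp} - \bP_{V^\perp} = (\bI_n - \bP_U) - (\bI_n - \bP_V) = \bP_V - \bP_U = -(\bP_U - \bP_V).
\ee
Third, taking Frobenius norms and squaring,
\be
d(U^\perp, V^\perp) = \norm{\bP_{U^\perp} - \bP_{V^\perp}}^2 = \norm{\bP_U - \bP_V}^2 = d(U, V),
\ee
using $\norm{-\bA} = \norm{\bA}$. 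This completes the argument.

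\textbf{Main obstacle.} There is essentially no hard step here; the only point requiring a modicum of care is the justification of $\bP_{U^\perp} = \bI_n - \bP_U$ in a way consistent with the paper's conventions, in particular that the inner product fixed in Section \ref{notation} (the bilinear form $\bu.\bv = \sum_i u_i v_i$, or in the complex case the relevant Hermitian form underlying $\bP_V = \bZ^{\text{H}}\bZ$) is the same one used to define both $U^\perp$ and the orthogonal projection operators. Once one fixes an orthonormal basis adapted to $U$ (resp.\ $V$) this is immediate; and since both $d(U^\perp,V^\perp)$ and $d(U,V)$ are invariant under change of orthonormal basis by Lemma \ref{rotation invariant}, no loss of generality is incurred. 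So the proof is a two-line computation once the complementary-projection identity is in hand.
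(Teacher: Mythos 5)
Your proposal is correct and follows essentially the same route as the paper: both rest on the identity $\bP_{U^\perp}=\bI_n-\bP_U$, from which $\bP_{U^\perp}-\bP_{V^\perp}=\bP_V-\bP_U$ and the equality of the squared Frobenius norms is immediate. Your extra care in justifying the complementary-projection identity and the basis-independence via Lemma~\ref{rotation invariant} is fine but not needed beyond what the paper states.
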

\begin{proof}
It is well known that $\bP_{U^\perp}=\bI-\bP_{U}$. Hence,
$$
d(U^\perp,V^\perp)\hspace{-1mm}=\hspace{-1mm}\text{tr}((\bP_U^\perp-\bP_V^\perp)^2)\hspace{-1mm}=\hspace{-1mm} \text{tr}((\bP_V-\bP_U)^2)\hspace{-1mm}=\hspace{-1mm} d(U,V).
$$
\end{proof}

\begin{lemma}
\label{direct sum related to distance}
Suppose that $U,T \in \cP(W)$ and let $V=U \oplus T$. Then we have
\be
\label{direct sum distance }
d(U,V)=\dim(T),
\ee
where $d(.,.)$ is defined in Definition\,\ref{dist-def}.
\end{lemma}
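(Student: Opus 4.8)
The plan is to reduce the (not necessarily orthogonal) direct sum $V=U\oplus T$ to an \emph{orthogonal} decomposition of $V$, after which the claim follows from elementary facts about projection matrices. The key observation is that $U\subseteq V$, so we may let $T'$ denote the orthogonal complement of $U$ inside $V$, i.e. $T'=V\cap U^{\perp}$. Then $V=U\oplus T'$ is an orthogonal direct sum, and since $V=U\oplus T$ forces $\dim V=\dim U+\dim T$, we get $\dim T'=\dim V-\dim U=\dim T$. Thus it suffices to show $d(U,V)=\dim T'$.

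Next I would exhibit a convenient basis: choose an orthonormal basis of $W$ whose first $\dim U$ vectors span $U$, whose next $\dim T'$ vectors span $T'$, and whose remaining vectors span $V^{\perp}$. In this basis both $\bP_U$ and $\bP_V$ are diagonal $0$–$1$ matrices, with $\bP_V$ having $1$'s exactly in the positions indexing $U$ together with those indexing $T'$, and $\bP_U$ having $1$'s exactly in the positions indexing $U$. Hence $\bP_V-\bP_U=\bP_{T'}$. (As a byproduct, this shows $\bP_U$ and $\bP_V$ are simultaneously diagonalizable, which is the fact invoked in the proof of \Lref{inequality for operator channel}.)

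Finally, by \Dref{dist-def} and the fact that $\bP_{T'}$ is an idempotent Hermitian matrix,
\[
d(U,V)=\norm{\bP_U-\bP_V}^2=\norm{\bP_{T'}}^2=\text{tr}\bigl(\bP_{T'}^{\text{H}}\bP_{T'}\bigr)=\text{tr}\bigl(\bP_{T'}^2\bigr)=\text{tr}(\bP_{T'})=\dim T'=\dim T,
\]
where $\text{tr}(\bP_{T'})$ equals $\rank(\bP_{T'})=\dim T'$ because the eigenvalues of the projection $\bP_{T'}$ are $0$ and $1$.

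There is no genuinely hard step here; the only point requiring care is that the direct sum in the statement is not assumed to be orthogonal, so one must first pass from $T$ to the orthogonal complement $T'$ of $U$ in $V$ and verify that this substitution preserves the dimension. Everything after that is routine manipulation of projection matrices.
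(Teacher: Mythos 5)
Your proof is correct and takes essentially the same route as the paper: both pass to an orthonormal basis adapted to the flag $U \subseteq V$ (the paper simply asserts such a basis exists, while you construct it explicitly via $T' = V \cap U^{\perp}$), in which $\bP_U$ and $\bP_V$ become diagonal $0$--$1$ matrices and the distance reduces to counting the extra $\dim T$ ones. The only cosmetic difference is that the paper explicitly invokes \Lref{rotation invariant} to justify evaluating $\norm{\bP_U-\bP_V}$ in that chosen basis, a fact you use implicitly and could cite for completeness.
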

\begin{proof}
Let $t=\dim{T}$ and $u=\dim{U}$. Then $\dim{V}=t+u$. One can always find a basis for $W$, namely $\{\boldsymbol{e}_1,\hdots,\boldsymbol{e}_n\}$, such that $U=\Span{\bI_u}$ and $V=\Span{\bI_v}$ where $\bI_u$ and $\bI_v$ are identity matrices of dimensions $u$ and $v$, respectively. Consequently, the orthogonal projection matrices associated with these subspaces are $\bP_U=\left[
\begin{array}{c|c}
\bI_u & \boldsymbol{0} \\
\hline
\boldsymbol{0} & \boldsymbol{0}
\end{array}
\right]$ and $\bP_V=\left[
\begin{array}{c|c}
\bI_v & \boldsymbol{0} \\
\hline
\boldsymbol{0} & \boldsymbol{0}
\end{array}
\right]$. 
Then the lemma follows immediately  by using \eqref{d function} and noting that the distance is rotation invariant by \Lref{rotation invariant}.
\end{proof}

\begin{lemma}\label{BHB}
    Let $\bB\in \C^{l\times n}$. Then $ \norm{
    \bB^{\text{H}}\bB}\leq \norm{\bB}^2.$

\end{lemma}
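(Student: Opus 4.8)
The plan is to diagonalize the situation via the singular value decomposition of $\bB$. Let $\sigma_1,\sigma_2,\hdots,\sigma_r$, with $r=\min(l,n)$, denote the singular values of $\bB$. First I would use the fact that the Frobenius norm is unitarily invariant, so that
\be
\norm{\bB}^2=\text{tr}(\bB^{\text{H}}\bB)=\sum_{i=1}^{r}\sigma_i^2 .
\ee
Next, note that $\bB^{\text{H}}\bB$ is a positive semidefinite Hermitian matrix whose nonzero eigenvalues are exactly $\sigma_1^2,\hdots,\sigma_r^2$, so that
\be
\norm{\bB^{\text{H}}\bB}^2=\text{tr}\bigl((\bB^{\text{H}}\bB)^2\bigr)=\sum_{i=1}^{r}\sigma_i^4 .
\ee

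With both sides expressed in terms of the $\sigma_i$'s, the lemma reduces to the elementary inequality $\sum_{i=1}^{r}\sigma_i^4\leq\bigl(\sum_{i=1}^{r}\sigma_i^2\bigr)^2$. This holds because expanding the right-hand side yields $\sum_{i}\sigma_i^4$ together with the cross terms $\sum_{i\neq j}\sigma_i^2\sigma_j^2$, all of which are nonnegative. Taking square roots gives $\norm{\bB^{\text{H}}\bB}\leq\norm{\bB}^2$, as claimed.

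Alternatively, one can bypass the SVD by invoking the mixed submultiplicativity of the Frobenius and spectral norms, namely $\norm{\bA\bC}\leq\norm{\bA}_2\norm{\bC}$ for conformable $\bA,\bC$, together with the standard bound $\norm{\cdot}_2\leq\norm{\cdot}$ of the spectral norm by the Frobenius norm; applying these with $\bA=\bB^{\text{H}}$ and $\bC=\bB$ gives $\norm{\bB^{\text{H}}\bB}\leq\norm{\bB^{\text{H}}}_2\norm{\bB}=\norm{\bB}_2\norm{\bB}\leq\norm{\bB}^2$. Either route is routine and there is no genuine obstacle; the only point that deserves care is the identification of the eigenvalues of $\bB^{\text{H}}\bB$ with the squared singular values of $\bB$ in the first approach, or the citation of the mixed-norm submultiplicativity in the second.
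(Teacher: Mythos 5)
Your argument is correct and matches the paper's proof essentially verbatim: both reduce the claim via the singular values of $\bB$ to the elementary inequality $\sum_i \sigma_i^4 \leq \bigl(\sum_i \sigma_i^2\bigr)^2$. The alternative route via mixed submultiplicativity is fine too, but the main argument is the same as the paper's.
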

\begin{proof}
    Let $\sigma_i \in \R$, for $i \in [l]$, denote the singular values of $\bB$.  Then the singular values of $\bB^{\text{H}}\bB $ are $\lambda_i^2$'s. Hence, by using \eqref{Frobnius norm deff} we have
    $$
    \norm{\bB\bB^{\text{H}}}=\sqrt{\sum_{i=1}^l \sigma_i^4} \leq \sum_{i=1}^l \sigma_i^2=\norm{\bB}^2.
    $$
\end{proof}

\end{document}